\theoremstyle{plain}
\newtheorem{theorem}{Theorem}[section]
\newtheorem{lemma}[theorem]{Lemma}
\newtheorem{prop}[theorem]{Proposition}
\newcommand{\nats}{\mathbb{N}}
\newcommand{\ints}{\mathbb{Z}}
\newcommand{\field}{\mathbb{F}}
\newcommand{\Oh}{{\cal O}}
\newcommand{\tOh}{\tilde{\cal O}}
\newcommand{\poly}{\mathrm{poly}}
\newcommand{\fMM}{f_\mathrm{MM}}
\newcommand{\NSETH}{\textsc{NSETH}\xspace}
\newcommand{\BMM}{\textsc{BMM}\xspace}
\newcommand{\osMM}{\textsc{os-MM}\xspace}
\newcommand{\OV}{\textsc{OV}\xspace}
\newcommand{\APSP}{\textsc{APSP}\xspace}
\newcommand{\AllZeroes}{\textsc{AllZeroes}\xspace}
\newcommand{\MPV}{\textsc{MM-Verification}\xspace}
\newcommand{\MPC}{\textsc{MM-Correction}\xspace}
\newcommand{\THREESUM}{\textsc{3SUM}\xspace}
\newcommand{\UPIT}{\textsc{UPIT}\xspace}
\newcommand{\FindNonzero}{\textsc{FindNonzero}\xspace}
\newcommand{\UpdateVals}{\textsc{UpdateValuesAndList}\xspace}
\newcommand{\todoremark}[2]{%
}
\title{On Nondeterministic Derandomization of Freivalds' Algorithm: Consequences, Avenues and Algorithmic Progress} 
\author{Marvin K\"unnemann\thanks{Max Planck Institute for Informatics, Saarland Informatics Campus, Saarbr\"ucken, Germany. \texttt{marvin@mpi-inf.mpg.de}}}
\date{}
\begin{document}

\maketitle

\begin{abstract}
Motivated by studying the power of randomness, certifying algorithms and barriers for fine-grained reductions, we investigate the question whether the multiplication of two $n\times n$ matrices can be performed in near-optimal \emph{nondeterministic} time $\tOh(n^2)$. Since a classic algorithm due to Freivalds verifies correctness of matrix products probabilistically in time $\Oh(n^2)$, our question is a relaxation of the open problem of derandomizing Freivalds' algorithm.

We discuss consequences of a positive or negative resolution of this problem and provide potential avenues towards resolving it. Particularly, we show that sufficiently fast deterministic verifiers for \THREESUM or univariate polynomial identity testing yield faster deterministic verifiers for matrix multiplication. Furthermore, we present the partial algorithmic progress that distinguishing whether an integer matrix product is correct or contains between 1 and $n$ erroneous entries can be performed in time $\tOh(n^2)$ -- interestingly, the difficult case of deterministic matrix product verification is not a problem of ``finding a needle in the haystack'', but rather cancellation effects in the presence of many errors. 

Our main technical contribution is a deterministic algorithm that corrects an integer matrix product containing at most $t$ errors in time $\tOh(\sqrt{t} n^2 + t^2)$. To obtain this result, we show how to compute an integer matrix product with at most $t$ nonzeroes in the same running time. This improves upon known deterministic output-sensitive integer matrix multiplication algorithms for $t = \Omega(n^{2/3})$ nonzeroes, which is of independent interest.
\end{abstract}

\section{Introduction}

Fast matrix multiplication algorithms belong to the most exciting algorithmic developments in the realm of low-degree polynomial-time problems. Starting with Strassen's polynomial speedup~\cite{Strassen69} over the naive $\Oh(n^3)$-time algorithm, extensive work (see, e.g.,~\cite{CoppersmithW90,VassilevskaW12,LeGall14}) has brought down the running time to $\Oh(n^{2.373})$ (we refer to~\cite{Blaeser13} for a survey). This leads to substantial improvements over naive solutions for a wide range of applications; for many problems, the best known algorithms make crucial use of fast multiplication of square or rectangular matrices. To name just a few examples, we do not only obtain polynomial improvements for numerous tasks in linear algebra (computing matrix inverses, determinants, etc.),  graph theory (finding large cliques in graphs~\cite{NesetrilP85}, All-Pairs Shortest Path for bounded edge-weights~\cite{AlonGM97}), stringology (context free grammar parsing~\cite{Valiant75}, RNA folding and language edit distance~\cite{BringmannGSW16}) and many more, but also strong subpolynomial improvements such as a $2^{\Omega(\sqrt{\log n})}$-factor speed-up for the All-Pairs Shortest Path problem (\APSP)~\cite{Williams14} or similar improvements for the orthogonal vectors problem (\OV)~\cite{AbboudWY15}. 
It is a famous open question whether the matrix multiplication exponent $\omega$ is equal to  $2$.

Matrix multiplication is the search version of the \MPV problem: given $n\times n$ matrices $A,B$ and a candidate $C$ for the product matrix, verify whether $AB=C$. There is a surprisingly simple randomized algorithm due to Freivalds~\cite{Freivalds79} that is correct with probability at least $1/2$: Pick a random vector $v\in \{0,1\}^n$, compute the matrix-vector products $Cv$ and $A(Bv)$, and declare $AB = C$ if and only if $Cv = ABv$. Especially given the simplicity of this algorithm and the widely-shared hope that $\omega=2$, one might conjecture that a deterministic version of Freivalds' algorithm exists. Alas, while refined ways to pick the random vector $v$ reduce the required number of random bits to $\log n + \Oh(1)$ \cite{NaorN93, KimbrelS93}, a $\tOh(n^2)$-time deterministic algorithms for matrix product verification remains elusive.

The motivation of this paper is the following question: 
\begin{center}
{\itshape  Can we solve Boolean, integer or real matrix multiplication in \emph{nondeterministic} $\tOh(n^2)$ time? }
\end{center}

Here we say that a functional problem $f$ is in nondeterministic time $t(n)$ if $f$ admits a \emph{$t(n)$-time verifier}: there is a function $v$, computable in deterministic time $t(n)$, where $n$ denotes the problem size of $x$, such that for all $x,y$ there exists a certificate $c$ with $v(x, y, c) = 1$ if and only $y = f(x)$.\footnote{Throughout the paper, we view any decision problem $P$ as a binary-valued functional problem. Thus a $t(n)$-time verifier for $P$ shows that $P$ is in nondeterministic \emph{and} co-nondeterministic time $t(n)$.}

Note that a $\tOh(n^2)$-time derandomization of Freivalds' algorithm would yield an affirmative answer: guess $C$, and verify $AB=C$ using the deterministic verification algorithm. In contrast, a nondeterministic algorithm may guess additional information, a \emph{certificate} beyond a guess $C$ on the matrix product, and use it to verify that $C=AB$. Surprising faster algorithms in such settings have recently been found for 3SUM and all problems subcubic equivalent to APSP under deterministic reductions \cite{CarmosinoGIMPS16}; see \cite{VassilevskaWilliamsW10,VassilevskaW18} for an overview over subcubic equivalences to APSP. 

In this paper, we discuss consequences of positive or negative resolutions of this question, propose potential avenues for an affirmative answer and present partial algorithmic progress. In particular, we show that (1) sufficiently fast verifiers for \THREESUM or univariate polynomial identity testing yield faster nondeterministic matrix multiplication algorithms, (2) in the integer case we can detect existence of between 1 and $n$ erroneous entries in $C$ in deterministic time $\tOh(n^2)$ and (3) we provide a novel deterministic output-sensitive integer matrix multiplication algorithm that improves upon previous deterministic algorithms if $AB$  has at least $n^{2/3}$ nonzeroes. 

\subsection{Further Motivation and Consequences}

Our motivation stems from studying the power of randomness, as well as algorithmic applications in certifiable computation, and consequences for the fine-grained complexity of polynomial-time problems.

\emph{Power of Randomness}: Matrix-product verification has one of the simplest randomized solution for which no efficient derandomization is known -- the currently best known deterministic algorithm simply computes the matrix product $AB$ in deterministic time $\Oh(n^\omega)$ and checks whether $C=AB$. Exploiting nondeterminism instead of randomization may yield insights into when and under which conditions we can derandomize algorithms without polynomial increases in the running time.

A very related case is that of univariate polynomial identity testing (\UPIT): it has a similar status with regards to randomized and deterministic algorithms. As we will see, finding $\tOh(n^2)$-time nondeterministic derandomizations for \UPIT is a more difficult problem, so that resolving our main question appears to be a natural intermediate step towards nondeterministic derandomizations of \UPIT, see Section~\ref{sec:intro-relationships}. 

\emph{Practical Applications -- Deterministic Certifying Algorithms}:
Informally, a \emph{certifying algorithm} for a functional problem $f$ is an algorithm that computes, for each input $x$, besides the desired output $y=f(x)$ also a certificate $c$ such that there is a \emph{simple} verifier that checks whether $c$ proves that $y=f(x)$ indeed holds~\cite{McConnellMNS11}. If we fix our notion of \emph{simplicity} to be that of being computable by a fast deterministic algorithm, then our notion of verifiers turns out to be a suitable notion to study existence of certifying algorithms -- it only disregards the running time needed to compute the certificate $c$. 

Having a fast verifier for matrix multiplication would certainly be desirable -- while Freivalds' algorithm yields a solution that is sufficient for many practical applications, it can never \emph{completely} remove doubts on the correctness. Since matrix multiplication is a central ingredient for many problems, 
fast verifiers for matrix multiplication imply fast verifiers for many more problems.

In fact, even if $\omega=2$, finding \emph{combinatorial}\footnote{Throughout this paper, we call an algorithm \emph{combinatorial}, if it does not use sophisticated algebraic techniques underlying the fastest known matrix multiplication algorithms.} strongly subcubic verifiers  is of interest, as these are more likely to yield practical advantages over more naive solutions. In particular, the known subcubic verifiers for all problems subcubic equivalent to APSP (under deterministic reductions)~\cite{CarmosinoGIMPS16}  all rely on fast matrix multiplication, and might not yet be relevant for practical applications.

\emph{Barriers for SETH-based Lower Bounds}:
Given the widely-shared hope that $\omega=2$, can we rule out conditional lower bounds of the form $n^{c-o(1)}$ with $c>2$ for matrix multiplication, e.g., based on the Strong Exponential Time Hypothesis (SETH)~\cite{ImpagliazzoP01}? 
Carmosino et al.~\cite{CarmosinoGIMPS16} proposed the Nondeterministic Strong Exponential Time Hypothesis (\NSETH) that effectively postulates that there is no $\Oh(2^{(1-\varepsilon)n})$-time co-nondeterministic  algorithm for $k$-SAT for all constant $k$. Under this assumption, we can rule out  fast nondeterministic or co-nondeterministic algorithms for all problems that have \emph{deterministic} fine-grained reductions from $k$-SAT. Conversely, if we find a nondeterministic matrix multiplication algorithm running in time $n^{c+o(1)}$, then \NSETH implies that there is no SETH-based lower bound of $n^{c'-o(1)}$, with $c'>c$, for matrix multiplication using deterministic reductions.

\emph{Barriers for Reductions in Case of a \emph{Negative} Resolution}:
Suppose that there is a negative resolution of our main question, specifically that Boolean matrix multiplication has no $n^{c-o(1)}$-time verifier for some $c>2$ (observe that this would imply $\omega>2$). Then by a simple $\Oh(n^2)$-time nondeterministic reduction from Boolean matrix multiplication to triangle finding (implicit in the proof of Theorem~\ref{thm:intro-MPVto3SUM} below) and a known $\Oh(n^2)$-time reduction from triangle finding to Radius~\cite{AbboudGW15}, Radius has no $n^{c-o(1)}$-time verifier. This state of affairs would rule out certain kinds of subcubic reductions from Radius to Diameter, e.g., deterministic many-one-reductions, since these would transfer a simple $\Oh(n^2)$-time verifier for Diameter\footnote{We verify that a graph $G$ has diameter $d$ as follows: For every vertex $v$, we guess the shortest path tree originating in $v$. It is straightforward to use this tree to verify that all vertices $v'$ have distance at most $d$ from $v$ in time $\Oh(n)$. Thus, we can prove that the diameter is at most $d$ in time $\Oh(n^2)$. For the lower bound, guess some vertex pair $u,v$ and verify that their distance is indeed $d$ using a single-source shortest path computation in time $\Oh(m+n\log n)= \Oh(n^2)$.} to Radius. Note that finding a subcubic reduction from Radius to Diameter is an open problem in the fine-grained complexity community~\cite{AbboudGW15}. 

\subsection{Structural Results: Avenues Via Other Problems}
\label{sec:intro-relationships}

We present two particular avenues for potential subcubic or even near-quadratic matrix multiplication verifiers: finding fast verifiers for either \THREESUM or univariate polynomial identity testing.

\paragraph*{3SUM}
One of the core hypotheses in the field of hardness in P is the \THREESUM problem~\cite{GajentaanO95}. Despite the current best time bound of $\Oh(n^2\cdot \frac{\poly \log \log n}{\log^2 n})$~\cite{BaranDP08, Chan18} being only slightly subquadratic, recently a strongly subquadratic verifier running in time $\tOh(n^{3/2})$ was found~\cite{CarmosinoGIMPS16}. We have little indication to believe that this verification time is optimal; for the loosely related computational model of decision trees, a remarkable near-linear time bound has been obtained just this year~\cite{KaneLM18}. 

By a simple reduction, we obtain that any polynomial speedup over the known \THREESUM verifier yields a subcubic Boolean matrix multiplication verifier.  
In particular, establishing a near-linear \THREESUM verifier would yield a positive answer to our main question in the Boolean setting.

\begin{theorem}\label{thm:intro-MPVto3SUM}
Any $\Oh(n^{3/2-\varepsilon})$-time verifier for \THREESUM yields a $\Oh(n^{3-2\varepsilon})$-time verifier for Boolean matrix multiplication.
\end{theorem}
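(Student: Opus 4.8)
The plan is to chain two nondeterministic reductions, each incurring only $\Oh(n^2)$ overhead: first from Boolean matrix product verification to triangle detection in a tripartite graph, then from triangle detection to \THREESUM. \emph{Step 1: reduce to triangle-freeness.} View the rows of $A$ as $a_1,\dots,a_n\in\{0,1\}^n$ and the columns of $B$ as $b_1,\dots,b_n\in\{0,1\}^n$, so that $(AB)_{ij}=1$ exactly when some $k$ satisfies $A_{ik}=B_{kj}=1$. Verifying that every entry with $C_{ij}=1$ is correct is easy nondeterministically: the certificate supplies an index $k_{ij}$ for each such entry and the verifier checks $A_{i,k_{ij}}=B_{k_{ij},j}=1$, in total time $\Oh(n^2)$; if a $1$-entry of $C$ is wrong, no such witness exists and the check fails. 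What remains is to verify that every entry with $C_{ij}=0$ is correct, i.e.\ that there is \emph{no} triple $(i,k,j)$ with $A_{ik}=B_{kj}=1$ and $C_{ij}=0$. This is exactly triangle detection in the tripartite graph $G$ with vertex parts $I,K,J$ (each a copy of $[n]$) and edge sets $E_A=\{(i,k):A_{ik}=1\}$, $E_B=\{(k,j):B_{kj}=1\}$, and---crucially---$E_C=\{(i,j):C_{ij}=0\}$: the product $C$ has a wrong $0$-entry iff $G$ contains a triangle. Hence $AB=C$ can be certified iff all claimed $1$-entries have witnesses and $G$ is triangle-free.

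\emph{Step 2: reduce triangle-freeness to \THREESUM.} I would encode triangle detection in $G$ as a $3$-list \THREESUM instance of size $\Oh(n^2)$ by placing the three vertex coordinates into disjoint ``digit blocks''. Fixing a base $B=3n$, assign to each edge the integer
\[
 f_A(i,k)=i-kB,\qquad f_B(k,j)=kB-jB^2,\qquad f_C(i,j)=-i+jB^2
\]
for $(i,k)\in E_A$, $(k,j)\in E_B$, and $(i,j)\in E_C$ respectively, and let $X,Y,Z$ be the three resulting lists, of total size at most $3n^2$. For a matched triple one has $f_A(i,k)+f_B(k,j)+f_C(i,j)=0$. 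Conversely, for arbitrary $x=f_A(i_1,k_1)$, $y=f_B(k_2,j_2)$, $z=f_C(i_3,j_3)$ the sum equals $(i_1-i_3)+(k_2-k_1)B+(j_3-j_2)B^2$, and since each coefficient has absolute value below $B/2$ this vanishes only if $i_1=i_3$, $k_1=k_2$, $j_2=j_3$---that is, iff the vertices $i_1,k_1,j_2$ form a triangle in $G$. So $G$ is triangle-free iff $(X,Y,Z)$ has no solution. (For the single-list formulation of \THREESUM, shift the three lists by $T,2T,-3T$ for a scale $T$ exceeding all occurring absolute values, so that any zero-sum triple must take exactly one element of each list and the used indices are automatically distinct.)

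\emph{Step 3: assemble the verifier.} A certificate that $AB=C$ consists of the witnesses $(k_{ij})$ for the $1$-entries together with a no-instance certificate for the \THREESUM instance $(X,Y,Z)$ from Step~2. The verifier checks the witnesses in $\Oh(n^2)$ time, builds $(X,Y,Z)$ in $\Oh(n^2)$ time, and then runs the assumed $\Oh(N^{3/2-\varepsilon})$-time \THREESUM verifier on the claim ``$(X,Y,Z)$ has no solution'' with $N=\Oh(n^2)$, costing $\Oh\bigl((n^2)^{3/2-\varepsilon}\bigr)=\Oh(n^{3-2\varepsilon})$. If $AB=C$ these checks can all be met. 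If $AB\neq C$, then either a claimed $1$-entry has no witness, so its check fails, or a $0$-entry is wrong, making $(X,Y,Z)$ a yes-instance so that---by definition of a verifier---no certificate makes the \THREESUM verifier accept ``no solution''; either way the verifier rejects every certificate. The total running time is $\Oh(n^{3-2\varepsilon})$.

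The only step with genuine content is the block-encoding of Step~2: one has to pick the base and signs so that zero-sum triples correspond \emph{exactly} to triangles of $G$, with no cross-cancellation between the blocks and (in the single-list variant) no degenerate solutions. Steps~1 and~3 and the running-time bookkeeping are routine.
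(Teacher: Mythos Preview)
Your proposal is correct and follows essentially the same approach as the paper: nondeterministically certify the $1$-entries via witnesses, then reduce checking the $0$-entries to a \THREESUM instance of size $\Oh(n^2)$ via a digit-block encoding of the indices. The paper performs the encoding directly (with base $W=2(n+1)$ and the $s_1+s_2=s_3$ formulation) rather than explicitly factoring through triangle detection, but the content is the same; a minor quibble is that you reuse the symbol $B$ for the base, clashing with the matrix~$B$.
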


Under the \BMM hypothesis, which asserts that there is no combinatorial $\Oh(n^{3-\varepsilon})$-time algorithm for Boolean matrix multiplication (see, e.g.,~\cite{AbboudVW14}), a $n^{3/2-o(1)}$-time lower bound (under randomized reductions) for combinatorial \THREESUM algorithms is already known \cite{JafargholiV16, VassilevskaWilliamsW10}. The above result, however, establishes a stronger, non-randomized relationship between the verifiers' running times by a simple proof exploiting nondeterminism.   

\paragraph*{UPIT}
Univariate polynomial identity testing (\UPIT) asks to determine, given two degree-$n$ polynomials $p,q$ over a finite field of polynomial order, represented as arithmetic circuits with $\Oh(n)$ wires, whether $p$ is identical to $q$.
By evaluating and comparing $p$ and $q$  at $n+1$ distinct points or $\tOh(1)$ random points, we can solve \UPIT deterministically in time $\tOh(n^2)$ or with high probability in time $\tOh(n)$, respectively. A nondeterministic derandomization, more precisely, a $\Oh(n^{2-\varepsilon})$-time verifier,  would have interesting consequences~\cite{Williams16}: it would refute the Nondeterministic Strong Exponential Time Hypothesis posed by Carmosino et al.~\cite{CarmosinoGIMPS16}, which in turn would prove novel circuit lower bounds, deemed difficult to prove. We observe that a sufficiently strong nondeterministic derandomization of \UPIT would also give a faster matrix multiplication verifier.

\begin{theorem}
Any $\Oh(n^{3/2-\varepsilon})$-time verifier for \UPIT yields a $\Oh(n^{3-2\varepsilon})$-time verifier for integer matrix multiplication. 
\end{theorem}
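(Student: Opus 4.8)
The plan is to reduce verification of an integer matrix product $AB=C$ to a single \UPIT instance of degree $N=n^2$; an $\Oh(N^{3/2-\varepsilon})$-time \UPIT verifier then becomes an $\Oh(n^{3-2\varepsilon})$-time verifier for integer \MM. The first ingredient is a univariate encoding of matrices: for an $n\times n$ matrix $M$ let $\widehat M(x):=\sum_{0\le i,j<n} M_{ij}\,x^{ni+j}$, a polynomial of degree at most $n^2-1$. Since the exponents $ni+j$ are pairwise distinct over $0\le i,j<n$, two matrices coincide iff their encodings coincide as polynomials; in particular $AB=C$ if and only if $\widehat{AB}\equiv\widehat{C}$.

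The \emph{crucial} point is that $\widehat{AB}$ admits an arithmetic circuit with only $\Oh(n^2)$ wires, even though we cannot afford to first form the product $AB$. Writing $v(x):=(1,x,\dots,x^{n-1})^\top$ and $u(x):=(1,x^{n},\dots,x^{n(n-1)})^\top$, one has $\widehat M(x)=u(x)^\top M\,v(x)$ and hence $\widehat{AB}(x)=u(x)^\top\bigl(A\,(B\,v(x))\bigr)$. A circuit realizing the right-hand side first computes the powers $x^2,\dots,x^{n-1}$ and $x^{n},x^{2n},\dots,x^{n(n-1)}$ using $\Oh(n)$ gates, then the $n$ entries $(Bv(x))_k=\sum_{j} B_{kj}x^{j}$ ($\Oh(n)$ gates each), then the $n$ entries $(A(Bv(x)))_i=\sum_{k} A_{ik}(Bv(x))_k$ ($\Oh(n)$ gates each), and finally forms $\sum_{i} x^{ni}(A(Bv(x)))_i$; in total $\Oh(n^2)$ wires. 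A circuit for $\widehat C$ is built analogously (and is simpler, since its coefficients are given explicitly).

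To obtain a bona fide \UPIT instance we pass to a finite field: choose a prime $p$ strictly larger than $\max_{i,j}|(AB)_{ij}-C_{ij}|$, which is polynomially bounded whenever the input entries are (otherwise one uses $\Oh(\log)$ small primes and combines the outcomes). Then $\widehat{AB}\equiv\widehat{C}$ over $\field_p$ iff the same holds over $\ints$, i.e.\ iff $AB=C$. Viewing the two circuits above over $\field_p$ therefore yields a \UPIT instance with degree bound $N=n^2$, $\Oh(N)$ wires, and a field of polynomial order. The resulting \MM verifier (i) constructs these two circuits in time $\Oh(n^2)$, and (ii) runs the assumed \UPIT verifier on them together with the supplied certificate, invoking it with the candidate answer ``the two polynomials are identical''; by construction it accepts for some certificate iff $AB=C$. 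Its running time is $\Oh(n^2)+\Oh(N^{3/2-\varepsilon})=\Oh(n^{3-2\varepsilon})$, using $N=n^2$ and $\varepsilon\le\tfrac12$. I expect the only genuine obstacle to be the circuit-size bound: a naive circuit for $\widehat{AB}$, or any attempt to read off its coefficients, would be of superquadratic size or amount to computing $AB$ itself, whereas the factorization $u(x)^\top(A(Bv(x)))$ through matrix--vector products keeps the wire count at $\Oh(n^2)=\Oh(N)$, which is exactly what makes the degree-$n^2$ \UPIT verifier applicable; the modular-arithmetic bookkeeping and the choice of $p$ are routine.
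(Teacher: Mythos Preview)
Your proposal is correct and follows essentially the same approach as the paper: encode the product matrix as a degree-$n^2$ univariate polynomial whose monomials are in bijection with the entries, and exhibit an $\Oh(n^2)$-wire circuit for it by exploiting the bilinear structure of matrix multiplication. Your expression $\widehat{AB}(x)=u(x)^\top\bigl(A(Bv(x))\bigr)$ is just a matrix--vector reformulation of the paper's decomposition $g(X)=\sum_{k} q_k(X)\,r_k(X^n)$ (group the sum by the index~$i$ instead of the inner index~$k$); both yield the same polynomial and the same $\Oh(n^2)$ wire count. The only cosmetic difference is that the paper first absorbs $C$ into the product via $A'B'=AB-C$ and then tests identity with the zero polynomial, whereas you compare $\widehat{AB}$ directly to $\widehat C$; and the paper defers the integer case to Chinese remaindering over $\Oh(1)$ small primes, which is what your parenthetical alludes to (choosing a single large prime deterministically in time $\Oh(n^2)$ is not immediate, so the multi-prime route is the clean option).
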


Note that this avenue might seem more difficult to pursue than a direct attempt at resolving our main question, due to its connection to NSETH and circuit lower bounds. Alternatively, however, we can view the specific arithmetic circuit obtained in our reductions as an interesting intermediate testbed for ideas towards derandomizing \UPIT. In fact, our algorithmic results were obtained by exploiting the connection to \UPIT, and exploiting the structure of the resulting specialized circuits/polynomials.

\subsection{Algorithmic Results: Progress on Integer Matrix Product Verification}

Our main result is partial algorithmic progress towards the conjecture in the integer setting. Specifically, we consider a restriction of \MPV to the case of detecting a bounded number~$t$ of errors. Formally, let $\MPV_t$ denote the following problem: given $n\times n$ integer matrices $A,B, C$ with polynomially bounded entries, produce an output ``$C=AB$'' or ``$C\ne AB$'', where the output must always be correct if $C$ and $AB$ differ in at most $t$ entries.   

Our main result is an algorithm that solves $\MPV_t$ in near-quadratic time for $t = \Oh(n)$ and in strongly subcubic time for $t = \Oh(n^{c})$ with $c<2$.
\begin{theorem}\label{thm:main-detect}
For any $1 \le t \le n^2$, $\MPV_t$ can be solved deterministically in time $O( (n^2 +tn)\log^{2+o(1)} n)$. 
\end{theorem}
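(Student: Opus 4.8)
The plan is to verify, for a given candidate $C$, whether $D := C - AB$ has at most $t$ nonzero entries, and if so that it is in fact the zero matrix. So it suffices to decide between "$D = 0$" and "$1 \le \|D\|_0 \le t$" (if $D$ has more than $t$ nonzeroes the algorithm may output anything). The guiding idea, borrowed from the \UPIT connection, is to encode the matrices as univariate polynomials: associate with row vectors and column vectors polynomials whose coefficients are the entries, so that a single polynomial product captures one entry of $AB$; more usefully, fold the whole matrix into a small number of polynomial identities by substituting powers of a formal variable, in the style of Kronecker substitution. The key fact to exploit is that $D$ has few nonzeroes, so the polynomial encoding $D$ is sparse, and sparse polynomials are detectably nonzero after evaluation at few well-chosen points.

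\textbf{The algorithmic steps.} First I would reduce, via a standard hashing/folding trick, to the case where the errors are ``spread out'': pick several (deterministically chosen, e.g.\ via a small family of almost-linear hash functions on $[n]\times[n]$) partitions of the index set into $O(t)$ buckets so that, under at least one partition, no bucket contains two erroneous entries. Then the problem on each bucket becomes ``detect a single possible error'', and a single error at position $(i,j)$ means $D_{i,j}\ne 0$, which can be certified by the scalar identity $(AB)_{i,j} = \sum_k A_{i,k}B_{k,j}$. Second, to batch all buckets together in near-quadratic time, I would use a polynomial encoding: replace entry $A_{i,k}$ by the monomial term $A_{i,k}x^{g(i,k)}$ and $B_{k,j}$ by $B_{k,j}x^{h(k,j)}$ for suitable exponent functions $g,h$ chosen so that the product $\sum_k A_{i,k}B_{k,j}x^{g(i,k)+h(k,j)}$ has, for each $(i,j)$, exactly one monomial landing in a ``private'' slot whose coefficient equals $(AB)_{i,j}$ — this is exactly computing an integer matrix product with the target being $t$-sparse. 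Each such polynomial multiplication is of degree $\poly(n)$ with $O(n)$-bounded coefficients, hence computable in time $\tOh(n)$ per row pair or, aggregated, $\tOh(n^2)$ overall by FFT-based integer multiplication, giving the $n^2\log^{2+o(1)}n$ term; the $tn$ term comes from the $O(t)$ buckets times $O(n)$ work, or from the $O(t)$ distinct hash functions each requiring a pass. Third, having localized each candidate error to a single position, compute $(AB)_{i,j} - C_{i,j}$ for those $\le t$ positions directly in $O(tn)$ time (or again via a batched polynomial evaluation) and declare $C\ne AB$ iff one of them is nonzero.

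\textbf{Derandomizing the hashing.} The randomized version of the bucketing is easy: a random hash into $\Theta(t^2)$ buckets isolates all $t$ errors with constant probability, and into $\Theta(t)$ buckets, $O(\log n)$ independent tries isolate any fixed error whp. To get a deterministic algorithm we cannot afford $t^2$ buckets (that would cost $t^2n$), so I would instead use an explicit \emph{$t$-perfect hash family} / splitter (Naor–Schulman–Srinivasan-style, or the ``color-coding'' derandomization via perfect hash families of size $\tOh(t^{O(1)})$... which is too large) — more precisely, the right tool is an $(n^2, t)$-\emph{separating} or \emph{cover-free} family, or a deterministic construction of $O(t\log n)$ hash functions $[n^2]\to[O(t)]$ such that every pair of distinct indices is separated by at least one of them; such families of size $\tOh(t)\cdot\polylog n$ are constructible in near-linear time. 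Summing $\tOh(t)$ hash functions times $\tOh(n)$ work per function... gives $\tOh(tn)$, matching the claimed bound. Alternatively, and perhaps cleaner, one iterates: with $\Theta(t)$ buckets a constant fraction of errors get isolated and can be removed, so $O(\log t)$ rounds suffice, each derandomized by a single splitter of size $\tOh(1)$; this keeps the overhead to $\polylog$ factors.

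\textbf{Main obstacle.} I expect the crux to be the deterministic isolation step: making the hashing/bucketing explicit while keeping the total overhead at $n^2 + tn$ up to $\log^{2+o(1)}n$ factors, rather than a larger polynomial in $t$. Naive perfect-hashing derandomizations blow up the bucket count or the number of hash functions, so one must use a carefully chosen splitter or separating hash family with near-optimal parameters and near-linear construction time, and verify that the polynomial-multiplication primitive can indeed be driven by arbitrary (not just power-of-two) hash values without extra cost — this likely needs a slightly nonstandard exponent assignment $g,h$ to avoid spurious collisions in the product polynomial. The integer-multiplication bookkeeping (bounding intermediate bit-lengths by $\tOh(n)$ so that one FFT-based multiplication per row pair, or one big batched multiplication, runs in $\tOh(n^2)$) is routine but is where the $\log^{2+o(1)}n$ factor enters, via the best known integer multiplication bounds.
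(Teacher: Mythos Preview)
Your opening paragraph actually contains the right idea --- ``sparse polynomials are detectably nonzero after evaluation at few well-chosen points'' --- but you then abandon it for a hashing/isolation strategy that has real gaps.

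\textbf{The gap.} Your plan hinges on two unproven steps. First, the derandomized isolation: you need, for every $t$-subset $S\subseteq [n]^2$, that some hash in your family isolates every element of $S$. You invoke perfect-hash families/splitters/cover-free families interchangeably, but none of the off-the-shelf constructions give you $\tOh(t)$ hash functions into $O(t)$ buckets with near-linear construction time; the standard bounds are larger by polynomial factors in $t$, which would wreck the $tn$ term. The iterative ``isolate a constant fraction and remove'' alternative requires you to \emph{find} the isolated errors' positions and values in each round, which you do not address, and the ``constant fraction isolated'' claim is a randomized expectation, not a deterministic guarantee. Second, and more fundamentally: even granting isolation, you must compute all bucket sums $\sum_{(i,j)\in B} (AB)_{i,j}$ in $\tOh(n^2)$ total time. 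For a generic hash $h:[n]^2\to[O(t)]$ this is as hard as computing $AB$; it becomes tractable only if the hash has algebraic structure compatible with the bilinear form. Your ``polynomial encoding'' paragraph gestures at this but is internally inconsistent: to make $\sum_k A_{i,k}B_{k,j}x^{g(i,k)+h(k,j)}$ land in one slot you need $g(i,k)+h(k,j)$ independent of $k$, forcing $g,h$ to be affine in $k$ --- which collapses your encoding to Kronecker substitution, not a hash.

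\textbf{What the paper does instead.} The paper skips isolation entirely. It encodes $D$ as the degree-$(<n^2)$ polynomial $g(X)=\sum_{i,j} D_{i,j}X^{(i-1)+n(j-1)}$ and observes that a polynomial with at most $t$ nonzero coefficients vanishes at $\omega^0,\dots,\omega^{t-1}$ (for $\omega$ of order $\ge n^2$ in a prime field) iff it is identically zero --- a one-line Vandermonde argument. The point is then that $g$ factors as $\sum_{k} q_k(X)\,r_k(X^n)$ with $q_k,r_k$ degree-$n$ polynomials read off from the columns of $A$ and rows of $B$, so evaluating $g$ at $t$ points reduces to $2n$ multipoint evaluations of degree-$n$ polynomials at $t$ points, costing $O((n+t)n\log^{2+o(1)}n)$ in total. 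Working over $O(1)$ small prime fields (via Chinese remaindering) handles the integer case. No hashing, no isolation, no iteration; the sparse-polynomial-testing sentence you wrote first is the whole proof.
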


Interestingly, this shows that detecting the presence of very few errors is not a difficult case. Instead of a needle-in-the-haystack problem, we rather need to find a way to deal with cancellation effects in the presence of at least $\Omega(n)$ errors. 

As a corollary, we obtain a different near-quadratic-time randomized algorithm for \MPV than Freivalds' algorithm: Run the algorithm of Theorem~\ref{thm:main-detect} for $t=n$ in time $\tOh(n^2)$. Afterwards, either $C=AB$ holds or $C$ has at least $\Omega(n)$ erroneous entries. Thus it suffices to sample $\Theta(n)$ random entries $i,j$ and to check whether $C_{i,j} = (AB)_{i,j}$ for all sampled entries (by naive computation of $(AB)_{i,j}$ in time $\Oh(n)$ each) to obtain an $\tOh(n^2)$-time algorithm that correctly determines $C=AB$ or $C\ne AB$ with constant probability. 
Potentially, this alternative to Freivalds' algorithm might be simpler to derandomize.

Finally, our algorithm for \emph{detecting} up to $t$ errors can be extended to a more involved algorithm that also \emph{finds} all erroneous entries (if no more than $t$ errors are present) and \emph{correct} them in time $\tOh(\sqrt{t} n^2 + t^2)$. In fact, this problem turns out to be equivalent to the notion of output-sensitive matrix multiplication $\osMM_t$: Given $n\times n$ matrices $A,B$ of polynomially bounded integer entries with the promise that $AB$ contains at most $t$ nonzeroes, compute $AB$.  

\begin{theorem}\label{thm:main-correct}
Let $1\le t \le n^2$.
 Given $n\times n$ matrices $A,B,C$ of polynomially bounded integers, with the property that $C$ differs from $AB$ in at most $t$ entries, we can compute $AB$ in time $\Oh(\sqrt{t}n^2 \log^{2+o(1)} n + t^2 \log^{3+o(1)} n)$. 
 Equivalently, we can solve $\osMM_t$ in time $\Oh(\sqrt{t}n^2 \log^{2+o(1)} n + t^2 \log^{3+o(1)} n)$.
 \end{theorem}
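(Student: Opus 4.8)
The plan is to encode the product $AB$ as the coefficient vector of a single sparse univariate polynomial, build a fast multipoint‑evaluation oracle for it, and then apply deterministic sparse polynomial interpolation (Prony/Ben‑Or--Tiwari). First I would dispose of the equivalence between the two statements: given $A,B,C$ with $AB-C$ having at most $t$ nonzeroes, pad $A$ by an identity block and $B$ by $-C$ to obtain the $2n\times 2n$ matrices $\hat A=\left(\begin{smallmatrix}A & I_n\\ 0 & 0\end{smallmatrix}\right)$ and $\hat B=\left(\begin{smallmatrix}B & 0\\ -C & 0\end{smallmatrix}\right)$; then $\hat A\hat B$ equals $AB-C$ in its top‑left block and is $0$ elsewhere, so $\hat A\hat B$ has at most $t$ nonzeroes, has polynomially bounded entries, and recovering it via $\osMM_t$ (in dimension $2n$) yields $AB=(\hat A\hat B)+C$. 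Conversely $\osMM_t$ is the case $C=0$. Hence it suffices to solve $\osMM_t$ in time $\Oh(\sqrt t\, n^2\,\polylog n + t^2\,\polylog n)$.

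For $\osMM_t$, index rows and columns from $0$ and consider $p(x,y)=\sum_{i,j}(AB)_{ij}x^iy^j=\sum_k u_k(x)v_k(y)$ with $u_k(x)=\sum_iA_{ik}x^i$, $v_k(y)=\sum_jB_{kj}y^j$, together with its Kronecker collapse $\hat p(z):=p(z,z^n)=\sum_{i,j}(AB)_{ij}z^{i+nj}$. Since $(i,j)\mapsto i+nj$ is a bijection onto $\{0,\dots,n^2-1\}$, the polynomial $\hat p$ has at most $t$ nonzero coefficients and exhibits no cancellation, so reading off its support and coefficients reconstructs $AB$. Fix a prime field $\mathbb F_q$ with $q=\poly(n)$ containing an element $g$ of multiplicative order $N=\Theta(n^2)$ with $N>n^2+2t$ — e.g.\ $q\equiv 1\pmod N$ with $N$ a power of two, so that $\mathbb F_q$ also supports length‑$N$ FFTs. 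The key primitive is that $\hat p$ can be evaluated along an entire geometric progression $g^0,\dots,g^{M-1}$ in time $\tOh(n^2+nM)$: since $\hat p(g^m)=\sum_k u_k(g^m)\,v_k(g^{nm})$, Bluestein's chirp transform evaluates each degree‑$(n-1)$ polynomial $u_k$ at $g^0,\dots,g^{M-1}$ and each $v_k$ at $g^0,g^n,\dots,g^{(M-1)n}$ in $\tOh(n+M)$ time — total $\tOh(n^2+nM)$ — after which the $M$ sums $\sum_k u_k(g^m)v_k(g^{nm})$ cost $\Oh(nM)$.

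Now set $M=2t$ and run Prony/Ben‑Or--Tiwari interpolation on $\hat p(g^0),\dots,\hat p(g^{2t-1})$: Berlekamp--Massey recovers the characteristic polynomial $\prod_\ell (z-g^{e_\ell})$ of degree $\le t$ (correctness uses that the $g^{e_\ell}$ are distinct, since $\mathrm{ord}(g)>n^2$, and that the coefficients are nonzero in $\mathbb F_q$); root‑finding over $\mathbb F_q$ yields the $g^{e_\ell}$; a baby‑step--giant‑step discrete logarithm in $\langle g\rangle$, whose order is $\Theta(n^2)$, recovers each exponent $e_\ell$ (hence the matrix position) in $\tOh(n)$ time; and a transposed Vandermonde solve recovers the coefficients. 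Summing: $\tOh(n^2+nt)$ for the evaluations, $\Oh(t^2)$ for Berlekamp--Massey, $\tOh(t)$ for root‑finding, $\tOh(tn)$ for the $\le t$ discrete logarithms, and $\tOh(t)$ for the coefficients; since $nt\le\sqrt t\,n^2$ and $n^2\le\sqrt t\,n^2$ whenever $1\le t\le n^2$, this is $\tOh(\sqrt t\,n^2+t^2)$, as claimed.

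One subtlety must be addressed: a nonzero integer entry $(AB)_{ij}$ may vanish modulo $q$, which would make Ben‑Or--Tiwari silently drop that term. Because $|(AB)_{ij}|\le\poly(n)$, each such entry is coprime to all but $\Oh(\log n)$ small primes, so I would run the entire procedure over $\Oh(\log n)$ distinct primes $q_1,\dots,q_r$ (each $\poly(n)$ and $\equiv 1$ modulo a suitable power of two $\ge n^2+2t$), take the union of the recovered supports, and reconstruct each entry exactly from its residues by the Chinese Remainder Theorem — multiplying the running time by only $\Oh(\log n)$. The hard part of the proof is exactly this number‑theoretic bookkeeping: choosing the fields so that the powers of $g$ separate all exponents below $n^2$, fast FFTs exist, discrete logarithms are cheap, and coefficient vanishing is under control — and then tracking the logarithmic overheads of FFT‑based polynomial arithmetic through all of these steps. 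The combinatorial heart, namely that a matrix product with at most $t$ nonzeroes is a $\le t$‑sparse polynomial admitting an $\tOh(n^2+nM)$‑time $M$‑point evaluation oracle via $\hat p=\sum_k u_k v_k$ and Bluestein's transform, is the single observation that makes the near‑quadratic dependence possible.
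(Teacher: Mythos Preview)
Your approach is genuinely different from the paper's: you attempt a direct Ben-Or--Tiwari/Prony sparse interpolation of the Kronecker polynomial $\hat p$, whereas the paper develops a recursive quadrant-search procedure that maintains and incrementally updates ``test values'' $g^{I,J}(\omega^\nu)$ for canonical submatrices $(I,J)$, combined with an exponential search on a granularity parameter and an $\tOh(t)$-time update rule after each discovered nonzero. That machinery exists precisely to avoid the root-finding step inherent in Ben-Or--Tiwari.

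There is a genuine gap in your argument. You write ``$\tOh(t)$ for root-finding'' over $\mathbb{F}_q$, but no deterministic algorithm achieving this is known. Factoring a degree-$t$ polynomial into its linear factors over $\mathbb{F}_q$ with $q=\Theta(n^2)$ is exactly the step the paper singles out in its Open Questions section (Section~\ref{sec:open}) as the obstruction to making Ben-Or--Tiwari work here: the best known deterministic bound is Shoup's $\Oh(t^{2+\varepsilon}\sqrt{q}\log^2 q)=\tOh(t^{2+\varepsilon} n)$, which yields only $\tOh(n^2+nt^{2+\varepsilon})$ overall---dominated by Kutzkov's algorithm and far from the claimed bound. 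The $\tOh(t)$ running time you invoke is that of randomized Cantor--Zassenhaus, which is precisely the kind of randomness the theorem is meant to eliminate. Everything else in your outline (the evaluation oracle via $\hat p=\sum_k u_k v_k$, Berlekamp--Massey, BSGS discrete logs, CRT over $\Oh(1)$ primes) is fine, but the pipeline breaks at this one step.

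A secondary issue concerns your choice of primes: you require $q\equiv 1\pmod N$ for a power of two $N=\Theta(n^2)$ so that Bluestein/FFT applies, but unconditionally the least such prime is only bounded by $N^{L}$ with Linnik's constant $L>5$, so locating it deterministically by sieving may already exceed your time budget. This particular point is fixable---the paper uses generic multipoint evaluation (Lemma~\ref{lem:polymultipoint}) over any $\field_p$ with $p=\Theta(n^2)$, which needs no FFT-friendly modulus---but the root-finding gap above is the essential obstacle.
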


Previous work by Gasieniec et al.~\cite{GasieniecLLPT17} gives a $\tOh(n^2 + tn)$ \emph{randomized} solution, as well as a $\tOh(tn^2)$ deterministic solution. Because of the parameter-preserving equivalence between~$t$ error correction and $\osMM_t$, this task is also solved by the randomized $\tOh(n^2 + tn)$-time algorithm due to Pagh~\cite{Pagh13}\footnote{For $t= \omega(n)$, Jacob and St\"ockel~\cite{JacobS15} give an improved randomized $\tOh(n^2 (t/n)^{\omega-2})$-time algorithm.} and the deterministic $\Oh(n^2+t^2 n \log^5 n)$-time algorithm due to Kutzkov~\cite{Kutzkov13}. Note that our algorithm improves upon Kutzkov's algorithm for $t = \Omega(n^{2/3})$, in particular, our algorithm is strongly subcubic for $t = \Oh(n^{3/2-\varepsilon})$ and even improves upon the best known fast matrix multiplication algorithm for $t = \Oh(n^{0.745})$.

\subsection{Further Related Work}

There is previous work that claims to have resolved our main question in the affirmative. Unfortunately, the approach is flawed; we detail the issue in the appendix. Furthermore, using the unrealistic assumption that integers of bit length $\tOh(n)$ can be multiplied in constant time, Korec and Wiedermann~\cite{KorecW14} provide an $\Oh(n^2)$-time deterministic verifier for integer matrix multiplication.     

Other work considers $\MPV$ and $\osMM$ in quantum settings, e.g.,~\cite{BuhrmanS06,JefferyKM12}. Furthermore, better running times can be obtained if we restrict the distribution of the errors over the guessed matrix/nonzeroes over the matrix product: Using rectangular matrix multiplication, Iwen and Spencer~\cite{IwenS09} show how to compute $AB$ in time $\Oh(n^{2+\varepsilon})$ for any $\varepsilon >0$, if no column (or no row) of $AB$ contains more than $n^{0.29462}$ nonzeroes. Furthermore, Roche~\cite{Roche18} gives a randomized algorithm refining the bound of Gasieniec et al.~\cite{GasieniecLLPT17} using, as additional parameters, the total number of nonzeroes in $A,B,C$ and the number of distinct columns/rows containing an error.

For the setting of Boolean matrix multiplication, several output-sensitive algorithms are known~\cite{SchnorrS98,YusterZ05,AmossenP09,Lingas09}, including a simple deterministic $\Oh(n^2 + tn)$-time algorithm~\cite{SchnorrS98} and, exploiting fast matrix multiplication, a randomized $\tOh(n^2 t^{\omega/2-1})$-time solution~\cite{Lingas09}. Note that in the Boolean case, our parameter-preserving reduction from error correction to output-sensitive multiplication (Proposition~\ref{prop:MPVtoAllZeroes}) no longer applies, so that these algorithms unfortunately do not immediately yield error correction algorithms.

\subsection{Paper Organization}
After collecting notational conventions and introducing polynomial multipoint evaluation as our main algorithmic tool in Section~\ref{sec:prelim}, we give a high-level description over the main ideas behind our results in Section~\ref{sec:mainIdeas}. We prove our structural results in Section~\ref{sec:relations}. Our first algorithmic result on error detection is proven in Section~\ref{sec:errordetection}. The main technical contribution, i.e., the proof of Theorem~\ref{thm:main-correct}, is given in Section~\ref{sec:errorcorrection}. We conclude with open questions in Section~\ref{sec:open}.

\section{Preliminaries}
\label{sec:prelim}

Recall the definition of a $t(n)$-time verifier for a functional problem $f$: there is a function $v$, computable in deterministic time $t(n)$ with $n$ being the problem size of $x$, such that for all $x,y$ there exists a certificate $c$ with $v(x, y, c) = 1$ if and only $y = f(x)$. Here, we assume the word RAM model of computation with a word size $w=\Theta(\log n)$.

For $n$-dimensional vectors $a, b$ over the integers, we write their inner product as $\langle a, b \rangle = \sum_{k=1}^n a[k]\cdot b[k]$, where $a[k]$ denotes the $k$-th coordinate of $a$. For any matrix $X$, we write $X_{i,j}$ for its value at row $i$, column $j$. We typically represent the $n\times n$ matrix $A$ by its $n$-dimensional row vectors $a_1, \dots, a_n$, and the $n\times n$ matrix $B$ by its $n$-dimensional column vectors $b_1, \dots, b_n$ such that $(AB)_{i,j} = \langle a_i, b_j \rangle$. For any $I\subseteq [n], J \subseteq [n]$, we obtain a submatrix $(AB)_{I,J}$ of $AB$ by deleting from $AB$ all rows not in $I$ and all columns not in $J$.

\paragraph*{Fast Polynomial Multipoint Evaluation}

Consider any finite field $\field$ and let $M(d)$ be the number of additions and multiplications in $\field$ needed to multiply two degree-$d$ univariate polynomials. Note that $M(d) = \Oh(d \log d \log \log d) = \Oh(d \log^{1+o(1)} n)$, see, e.g.~\cite{vzGathenG13}. 

\begin{lemma}[{Multipoint Polynomial Evaluation~\cite{Fiduccia72}}]\label{lem:polymultipoint}
Let $\field$ be an arbitrary field. Given a degree-$d$ polynomial $p \in \field[X]$ given by a list of its coefficients $(a_0, \dots, a_d)\in \field^{d+1}$, as well as input points $x_1, \dots, x_d \in \field$, we can determine the list of evaluations $(p(x_1), \dots, p(x_d))\in \field^{n}$ using $\Oh( M(d) \log d)$ additions and multiplications in $\field$.
\end{lemma}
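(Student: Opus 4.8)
The plan is to use the classical divide-and-conquer algorithm built around a \emph{subproduct tree}. Form a balanced binary tree with $d$ leaves, labelling the $i$-th leaf by the monic linear polynomial $m_i := X - x_i$ and every internal node by the product of its two children's labels; the root is then labelled by $m := \prod_{i=1}^d (X - x_i)$, and a node at height $h$ carries a polynomial of degree $\Oh(2^h)$. First I would construct this tree bottom-up. Since the polynomials stored on any fixed level have degrees summing to $\Oh(d)$, and $M$ is superadditive (as it is for $M(d) = \Oh(d \log d \log\log d)$), forming all the products on one level costs $\Oh(M(d))$, so the whole tree is built in $\Oh(M(d)\log d)$ operations in $\field$.

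Next I would run the evaluation phase top-down. At the root I compute the residue $p \bmod m$ (note $\deg m = d$, so the remainder is well defined and, since $m(x_i)=0$, it agrees with $p$ at every $x_i$). Descending the tree, at an internal node labelled $g = g_L \cdot g_R$ that already stores $p \bmod g$, I obtain the residues at its children by the further reductions $(p \bmod g) \bmod g_L$ and $(p \bmod g) \bmod g_R$; this is correct because $g_L \mid g$ and $g_R \mid g$, so reducing modulo a divisor of $g$ after reducing modulo $g$ yields the same polynomial as reducing $p$ directly. At the $i$-th leaf this produces $p \bmod (X - x_i) = p(x_i)$, as desired. Reducing a polynomial of degree $\Oh(k)$ modulo a monic polynomial of degree $k$ takes $\Oh(M(k))$ operations by the standard reversal-and-Newton-inversion division algorithm, which works over any field since the divisors are monic; summing over one level gives $\Oh(M(d))$ by superadditivity, and over all $\Oh(\log d)$ levels gives $\Oh(M(d)\log d)$. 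Combining the two phases yields the claimed bound.

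The only places that need genuine care are the two level-wise cost estimates, both of which hinge on $M$ being a well-behaved multiplication function, i.e.\ $M(k_1) + M(k_2) \le M(k_1 + k_2)$ and $M(\Oh(k)) = \Oh(M(k))$ --- properties that the quoted $\Oh(d\log d\log\log d)$ bound satisfies --- and the division-with-remainder subroutine itself. For the latter I would recall that to divide $a$ (degree $< 2k$) by a monic $b$ (degree $k$) one computes the quotient as $q = \mathrm{rev}(a) \cdot \big(\mathrm{rev}(b)\big)^{-1} \bmod X^{k}$, where the power-series inverse is obtained in $\Oh(\log k)$ Newton steps each dominated by one multiplication of degree-$\Oh(k)$ polynomials, and then sets the remainder to $a - qb$; this costs $\Oh(M(k))$ overall. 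Everything remaining is bookkeeping: relabelling leaves, passing residues down the tree, and checking that all degree bounds telescope correctly.
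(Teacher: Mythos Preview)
The paper does not actually prove this lemma; it is stated with a citation to Fiduccia and used as a black box. Your sketch is the standard subproduct-tree algorithm (as in, e.g., von zur Gathen--Gerhard), and it is correct.

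One small wording issue worth tightening: you say the power-series inverse is obtained in $\Oh(\log k)$ Newton steps ``each dominated by one multiplication of degree-$\Oh(k)$ polynomials'' and then conclude $\Oh(M(k))$ overall. Taken literally, $\Oh(\log k)$ steps at cost $\Oh(M(k))$ each would give $\Oh(M(k)\log k)$. The $\Oh(M(k))$ bound holds because the $i$-th Newton step only works modulo $X^{2^i}$ and hence costs $\Oh(M(2^i))$; these costs sum geometrically to $\Oh(M(k))$ under the superadditivity/regularity assumptions you already invoke. This does not affect the final $\Oh(M(d)\log d)$ bound for the lemma.
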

Thus, we can evaluate $p$ on any list of inputs $x_1,\dots,x_n$ in time $\Oh( (n+d) \log^{2+o(1)} d)$.

\section{Technical Overview}
\label{sec:mainIdeas}

We first observe a simple parameter-preserving equivalence of the following problems,
\begin{description}
\item[$\MPV_t$] Given $\ell\times n, n \times \ell, \ell\times \ell$ matrices $A,B,C$ such that $AB$ and $C$ differ in $0 \le z \le t$ entries, determine whether $AB=C$, i.e., $z=0$,
\item[$\AllZeroes_t$] Given $\ell\times n, n \times \ell$ matrices $A,B$ such that $AB$ has $0 \le z \le t$ nonzeroes, determine whether $AB=0$, i.e., $z=0$.
\end{description}
We also obtain a parameter-preserving equivalence of their ``constructive'' versions,
\begin{description}
\item[$\MPC_t$] Given $\ell\times n, n \times \ell, \ell\times \ell$ matrices $A,B,C$ such that $AB$ and $C$ differ in $0 \le z \le t$ entries, determine~$AB$,
\item[$\osMM_t$] Given $\ell\times n, n \times \ell$ matrices $A,B$ such that $AB$ has $0 \le z \le t$ nonzeroes, determine~$AB$.
\end{description}
For any problem $P_t$ among the above, let $T_P(n,\ell,t)$ denote the optimal running time to solve~$P_t$ with parameters $n$, $\ell$ and $t$.

\begin{prop}\label{prop:MPVtoAllZeroes}
Let $\ell \le n$ and $1\le t \le n^2$. We have 
\begin{align*}
 T_\MPV(n, \ell, t) & = \Theta(T_\AllZeroes(n,\ell,t))\\
 T_\MPC(n,\ell, t) & = \Theta(T_\osMM(n,\ell,t)).
\end{align*}
\end{prop}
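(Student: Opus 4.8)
The plan is to exhibit, for each direction, a simple linear-time transformation that preserves the parameters $n, \ell, t$ up to constants, so that an algorithm for one problem yields an algorithm for the other within the same asymptotic running time. The four reductions come in two symmetric pairs: $\MPV_t \leftrightarrow \AllZeroes_t$ and $\MPC_t \leftrightarrow \osMM_t$, and within each pair the two directions are nearly identical.

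First I would handle the reduction $\MPV_t \le \AllZeroes_t$ (and analogously $\MPC_t \le \osMM_t$). Given an instance $A, B, C$ of $\MPV_t$, form the block matrices $A' = \begin{pmatrix} A & -I \end{pmatrix}$ of size $\ell \times (n+\ell)$ and $B' = \begin{pmatrix} B \\ C \end{pmatrix}$ of size $(n+\ell) \times \ell$, where $I$ is the $\ell \times \ell$ identity. Then $A'B' = AB - C$, which is the zero matrix exactly when $AB = C$, and its number of nonzeroes equals the number $z \le t$ of entries in which $AB$ and $C$ differ. So $(A', B')$ is a valid $\AllZeroes_t$ instance with inner dimension $n + \ell \le 2n$ (using $\ell \le n$) and outer dimension $\ell$; solving it answers the $\MPV_t$ instance. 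For $\MPC_t$, the same construction lets us recover $AB - C$ from an $\osMM_t$ solver, and then $AB = (AB - C) + C$ is computed in $O(\ell^2)$ additional time. The transformation and the final correction both run in time $O(n\ell + \ell^2) = O(n^2)$, which is dominated by any reasonable running time for the target problem.

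For the converse $\AllZeroes_t \le \MPV_t$ (and $\osMM_t \le \MPC_t$): given $A, B$ for $\AllZeroes_t$, simply set $C = 0$ (the all-zeroes $\ell \times \ell$ matrix). Then $AB$ and $C$ differ in exactly the $z \le t$ positions where $AB$ is nonzero, so $(A, B, C)$ is a legal $\MPV_t$ instance, and $AB = C$ iff $AB = 0$. For $\osMM_t$, the same choice $C = 0$ reduces to $\MPC_t$, whose output is directly $AB$. These reductions are trivial and take $O(\ell^2)$ time to write down $C$. Combining the two directions gives $T_\MPV(n, \ell, t) = \Theta(T_\AllZeroes(n, \ell, t))$ and $T_\MPC(n, \ell, t) = \Theta(T_\osMM(n, \ell, t))$, as claimed.

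The only point requiring a little care — and the closest thing to an obstacle — is checking that the parameters genuinely line up: one must verify that the promise ``$AB$ and $C$ differ in at most $t$ entries'' transfers to ``$A'B'$ has at most $t$ nonzeroes'' and vice versa (immediate from $A'B' = AB - C$), and that the blown-up inner dimension $n + \ell$ stays within a constant factor of $n$, which is exactly where the hypothesis $\ell \le n$ is used. One should also note that the additive overhead $O(n\ell + \ell^2)$ incurred by forming the block matrices and performing the final addition is subsumed by the cost of any nontrivial algorithm for these problems (all of which are $\Omega(n\ell)$ just to read the input), so the $\Theta(\cdot)$ equivalences hold without further assumptions.
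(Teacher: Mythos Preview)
Your proposal is correct and follows essentially the same approach as the paper: the paper constructs the same block matrices (written out row- and column-wise as $a'_i=(a_i,-e_i)$, $b'_j=(b_j,c_j)$, which is exactly your $A'=(A\mid -I)$, $B'=\begin{pmatrix}B\\C\end{pmatrix}$), obtains $A'B'=AB-C$, and for the reverse direction sets $C=0$. Your block-matrix presentation is slightly cleaner, and you make explicit the use of $\ell\le n$ and the $\Omega(n\ell)$ input-reading lower bound that the paper leaves implicit.
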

\begin{proof}
By setting $C=0$, we can reduce $\AllZeroes_t$ and $\osMM_t$ to $\MPV_t$ and $\MPC_t$, respectively, achieving the lower bounds of the claim.

For the other direction, let $a_1, \dots, a_\ell \in \ints^n$ be the row vectors of $A$, $b_1, \dots, b_\ell \in \ints^n$ be the column vectors of $B$ and $c_1, \dots, c_\ell \in \ints^\ell$ be the column vectors of $C$. Let $e_i$ denote the vector whose $i$-th coordinate is 1 and whose other coordinates are 0. We define $\ell \times (n+\ell), (n+\ell) \times \ell$ matrices $A', B'$ by specifying the row vectors of $A'$ as
\begin{align*}
  a'_i & = ( a_i, -e_i),
\end{align*}
and the column vectors of $B'$ as
\begin{align*}
  b'_j & = ( b_j, c_j).
\end{align*}
Note that $(A'B')_{i,j} = \langle a'_i, b_j' \rangle = \langle a_i, b_j \rangle - c_j[i]$, thus $(A'B')_{i,j} = 0$ if and only if $(AB)_{i,j} = C_{i,j}$. Consequently, $A'B'$ has at most $t$ nonzeroes, and checking equality of $A'B'$ to the all-zero matrix is equivalent to checking $AB = C$. The total time to solve $\MPV_t$ is thus bounded by $\Oh((n+\ell)\ell) + T_\AllZeroes(n+\ell, \ell, t) = \Oh(T_\AllZeroes(n, \ell,t))$, as desired. 

Furthermore, by computing $C' = A'B'$ (which contains at most $t$ nonzero entries), we can also correct the matrix product $C$ by updating $C_{i,j}$ to $C_{i,j} + C'_{i,j}$. This takes time $\Oh((n+\ell)\ell) + T_\osMM(n+\ell,\ell,t) = \Oh(T_\osMM(n,\ell, t))$, as desired.
\end{proof}

Because of the above equivalence, we can focus on solving $\AllZeroes_t$ and $\osMM_t$ in the remainder of the paper. The key for our approach is the following multilinear polynomial
\[ \fMM^{A,B}(x_1,\dots, x_\ell; y_1, \dots, y_\ell) := \sum_{i,j\in [\ell]} x_i \cdot y_j \cdot \langle a_i, b_j \rangle, \]
where again the $a_1,\dots, a_\ell$ denote the row vectors of $A$ and the $b_1,\dots,b_\ell$ denote the column vectors of $B$.
Note that the nonzero monomials of $\fMM^{A,B}$ correspond directly to the nonzero entries of $AB$. We introduce a univariate variant
\[ g(X) = g^{A,B}(X) := \fMM^{A,B}(1, X, \dots, X^{\ell-1}; 1, X^\ell, \dots, X^{\ell(\ell-1)}), \]
which has the helpful property that monomials $x_i y_j$ of $\fMM$ are mapped to the monomial $X^{(i-1)+\ell(j-1)}$ in a one-to-one manner, preserving coefficients. To obtain a more efficient representation of $g$ than to explicitly compute all coefficients $\langle a_i, b_j\rangle$, we can exploit linearity of the inner product: we have $g(X) = \sum_{k=1}^n q_k(X)r_k(X^\ell)$, where $q_k(Z) = \sum_{i=1}^\ell a_i[k] Z^{i-1}$ and $r_k(Z) = \sum_{j=1}^\ell b_j[k]Z^{j-1}$. This representation is more amenable for efficient evaluation, and immediately yields a reduction to univariate polynomial identity testing (\UPIT) (see Theorem~\ref{thm:MPVtoUPIT} in Section~\ref{sec:relations}). 

To solve the detection problem, we use an idea from sparse polynomial interpolation~\cite{Ben-OrT88,Zippel90}: If $AB$ has at most $t$ nonzeroes, then for any root of unity $\omega$ of sufficiently high order, $g(\omega^0) = g(\omega^1) = g(\omega^2) = \cdots = g(\omega^{t-1}) = 0$ is equivalent to $AB = 0$. By showing how to do fast batch evaluation of $g$ using the above representation, we obtain an $\tOh((\ell+ t)n)$-time algorithm for $\AllZeroes_t$ in Section~\ref{sec:errordetection}, proving Theorem~\ref{thm:main-detect}.

Towards solving the correction problem, the naive approach is to use the $\tOh((\ell + t)n)$-time $\AllZeroes_t$ algorithm in combination with a self-reduction to obtain a fast algorithm for finding a nonzero position $(i,j)$ of $AB$: If the \AllZeroes algorithm determines that $AB$ contains at least one nonzero entry, we split the product matrix $AB$ into four submatrices, detect any one of them containing a nonzero entry, and recurse on it. After finding such an entry, one can compute the correct nonzero value $(AB)_{i,j} = \langle a_i, b_j \rangle$ in time $\Oh(n)$. One can then ``remove'' this nonzero from further search (analogously to Proposition~\ref{prop:MPVtoAllZeroes}) and iterate this process. Unfortunately, this only yields an algorithm of running time $\tOh(t n^2)$, even if $\AllZeroes$ would take near-optimal time $\tOh(n^2)$. A faster alternative is to use the self-reduction such that we find \emph{all} nonzero entries whenever we recurse on a submatrix containing at least one nonzero value. However, this process only leads to a running time of $\tOh(\sqrt{t}n^2 + nt^2)$. Here, the bottleneck $\tOh(nt^2)$ term stems from the fact that performing an \AllZeroes test for $t$ submatrices (e.g., when $t$ nonzeroes are spread evenly in the matrix) takes time $t \cdot \tOh(nt)$. 

We still obtain a faster algorithm by a rather involved approach: The intuitive idea is to test submatrices for appropriately smaller number of nonzeroes $z \ll t$. At first sight, such an approach might seem impossible, since we can only be certain that a submatrix contains no nonzeroes if we test it for the full number $t$ of potential nonzeroes. However, by showing how to reuse and quickly update previously computed information after finding a nonzero, we make this approach work by obtaining ``global'' information at a small additional cost of $\tOh(t^2)$. Doing these dynamic updates quickly crucially relies on the efficient representation of the polynomial $g$. The details are given in Section~\ref{sec:errorcorrection}.

\section{Structural Results: Avenues Via Other Problems}
\label{sec:relations}

In this section, we show the simple reductions translating verifiers for \THREESUM or \UPIT to matrix multiplication.

\subsection{3SUM}

We consider the following formulation of the \THREESUM problem: given sets $S_1,S_2,S_3$ of polynomially bounded integers, determine whether there exists a triplet $s_1\in S_1, s_2 \in S_2, s_3 \in S_3$ with $s_1 + s_2 = s_3$.
It is known that a combinatorial $\Oh(n^{3/2-\varepsilon})$-time algorithm for 3SUM (for any $\varepsilon > 0$) yields a combinatorial $\Oh(n^{3-\varepsilon'})$-time Boolean matrix multiplication (\BMM) algorithm (for some $\varepsilon'>0$). This follows by combining a reduction from Triangle Detection to 3SUM of \cite{JafargholiV16} and using the combinatorial subcubic equivalence of Triangle Detection and \BMM\cite{VassilevskaWilliamsW10}\footnote{K. G. Larsen obtained an independent proof of this fact, see \url{https://simons.berkeley.edu/talks/kasper-larsen-2015-12-01}.}. While this only yields a \emph{nontight} \BMM-based lower bound for 3SUM for \emph{deterministic or randomized} combinatorial algorithms, we can establish a \emph{tight} relationship for the current state of knowledge of combinatorial verifiers. In fact, allowing nondeterminism, we obtain a very simple direct proof of a stronger relationship of the running times than known for deterministic reductions.

\begin{theorem}\label{thm:MPVto3SUM}
If 3SUM admits a (``combinatorial'') $\Oh(n^{3/2-\varepsilon})$-time verifier, then \BMM admits a (``combinatorial'') $\Oh(n^{3-2\varepsilon})$-time verifier.\footnote{Strictly speaking, the notion of a ``combinatorial'' algorithm is not well-defined, hence we use quotes here. However, our reductions are so simple that they should qualify under any reasonable exact definition.}
\end{theorem}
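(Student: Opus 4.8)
The plan is to give a direct nondeterministic reduction from \BMM to \THREESUM that makes black-box use of the assumed \THREESUM verifier. Recall that \BMM on $n\times n$ Boolean matrices $A,B$ amounts to deciding, for each pair $(i,j)$, whether there is a ``witness'' $k$ with $A_{i,k}=B_{k,j}=1$; equivalently, \BMM reduces to Triangle Detection (and its negation). Exploiting nondeterminism, the candidate product $C$ can simply be guessed, so it suffices to build a verifier that, given $A,B,C$, certifies $C=AB$. I would split this into two one-sided checks: (i) for each entry with $C_{i,j}=1$, certify that a witness $k$ exists --- this is trivial, guess $k$ and check $A_{i,k}=B_{k,j}=1$ in $\Oh(1)$ time per entry, hence $\Oh(n^2)$ overall; (ii) for the entries with $C_{i,j}=0$, certify that \emph{no} witness exists --- i.e., certify a \THREESUM-style ``no'' instance, which is exactly where the assumed verifier is used.

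For step (ii), the key step is to encode the ``no common witness'' condition as a single \THREESUM instance (or a small batch of them) of size roughly $\Oh(n)$ per slice, so that running the \THREESUM verifier $n$ times costs $n\cdot \Oh(n^{3/2-\varepsilon}) = \Oh(n^{5/2-\varepsilon})$; to reach $\Oh(n^{3-2\varepsilon})$ one batches more aggressively. Concretely, I would group rows of $A$ and columns of $B$ into blocks of size $n^{1-\varepsilon}$ (the exact exponent chosen to balance the two terms), so there are $n^{2\varepsilon}$ block-pairs, and for each block-pair form a \THREESUM instance whose elements are products of a row index / column index tag with the position $k$: for a block $I$ of rows, a block $J$ of columns, let $S_1=\{\,\text{enc}(i)+k : i\in I,\ A_{i,k}=1\,\}$, $S_2=\{\,\text{enc}(j)+k : j\in J,\ B_{k,j}=1\,\}$ suitably so that a triple matches exactly when some $(i,j)\in I\times J$ has a common witness $k$, while the forbidden pairs (those with $C_{i,j}=0$) are the ones we must certify witness-free. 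Each such instance has $\Oh(|I|\,n + |J|\,n)=\Oh(n^{2-\varepsilon})$ elements, and there are $\Oh(n^{2\varepsilon})$ of them with combined non-witness certification cost $\Oh(n^{2\varepsilon})\cdot \Oh((n^{2-\varepsilon})^{3/2-\varepsilon})$; choosing the block size to equalize this with the $\Oh(n^2)$ cost of step (i) and with the work of reading the input yields the claimed $\Oh(n^{3-2\varepsilon})$ bound. (One must double-check that polynomially bounded integer encodings suffice to keep the \THREESUM elements polynomially bounded, which holds since all tags and indices are $\poly(n)$.)

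The main obstacle I anticipate is the encoding in step (ii): \THREESUM as stated only certifies the \emph{existence} or \emph{non-existence} of one matching triple across the whole instance, whereas \BMM needs, for \emph{every} zero entry $(i,j)$ simultaneously, that no witness exists --- a conjunction of many non-existence statements. The fix is to make the \THREESUM instance ``detect any violation'': engineer the sets so that a matching triple exists in the \THREESUM instance \emph{if and only if} some pair $(i,j)$ with $C_{i,j}=0$ nevertheless has a common witness; then a \THREESUM ``no'' certificate is exactly a proof that $C$ has no false zeros on that block-pair. Achieving this requires the index encodings $\text{enc}(\cdot)$ to be injective and ``sum-free'' enough that an accidental sum of unrelated tags cannot masquerade as a valid $(i,j,k)$-triple; standard tricks (e.g.\ base-$(n+1)$ positional encodings, or shifting by disjoint large ranges) handle this, and since the \THREESUM verifier is assumed to be correct on all instances, correctness of the reduction follows. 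The ``combinatorial'' qualifier needs no extra care: every operation above is elementary bookkeeping and table lookups, so a combinatorial \THREESUM verifier yields a combinatorial \BMM verifier.
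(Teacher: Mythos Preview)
Your two-step strategy --- guess witnesses for the $1$-entries, then encode ``no zero entry has a witness'' as a \THREESUM instance and invoke the assumed verifier --- is exactly the paper's approach. However, your execution of step (ii) has a genuine gap: the blocking scheme is both unnecessary and, as you parametrize it, does not reach the claimed bound. With block size $s=n^{1-\varepsilon}$ you get $n^{2\varepsilon}$ block-pairs each yielding a \THREESUM instance of size $\Theta(n^{2-\varepsilon})$, so the total cost is $n^{2\varepsilon}\cdot (n^{2-\varepsilon})^{3/2-\varepsilon}=n^{3-\frac{3}{2}\varepsilon+\varepsilon^2}$, which is strictly weaker than $n^{3-2\varepsilon}$. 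If you optimize over the block size $s$ the total time $(n/s)^2\cdot (sn)^{3/2-\varepsilon}$ is monotone decreasing in $s$, so the best choice is $s=n$, i.e.\ a \emph{single} \THREESUM instance of size $\Oh(n^2)$, giving $(n^2)^{3/2-\varepsilon}=n^{3-2\varepsilon}$ directly. This is precisely what the paper does: set $W=2(n+1)$, take $S_1=\{iW^2+k: A_{i,k}=1\}$, $S_2=\{jW-k: B_{k,j}=1\}$, $S_3=\{iW^2+jW: C_{i,j}=0\}$, and observe that a \THREESUM triple exists iff some zero entry of $C$ is wrong.

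Two smaller issues in your sketch: your encoding has both $S_1$ and $S_2$ \emph{adding} $k$, so the witness index never cancels in $s_1+s_2$; one of the two must subtract $k$ (or you must use a convolution-style formulation). And you never actually define $S_3$; it should encode exactly the pairs $(i,j)$ with $C_{i,j}=0$, so that a match corresponds to a false zero. With these fixes and the removal of the blocking detour, your argument becomes the paper's proof.
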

Thus, significant combinatorial improvements over Carmosino et al.'s \THREESUM verifier yield strongly subcubic combinatorial \BMM verifiers. In particular, a $\tOh(n)$-time verifier for 3SUM would yield an affirmative answer to our main question in the Boolean setting. Note that an analogous improvement of the $\Oh(n^{3/2}\sqrt{\log n})$~\cite{GronlundP14} size bound in the decision tree model to a size of $\Oh(n\log^2 n)$ has recently been obtained~\cite{KaneLM18}.

To establish this strong relationship, our reduction exploits the nondeterministic setting -- without nondeterminism, no reduction is known that would give a $\Oh(n^{\frac{8}{3}- \varepsilon})$-time \BMM algorithm even if \THREESUM could be solved in an optimal $\Oh(n)$ time bound.

\begin{proof}[Proof of Theorem~\ref{thm:MPVto3SUM}]
Given the $n \times n$ Boolean matrices $A,B,C$, we first check whether all entries $(i,j)$ with $C_{i,j} = 1$ are correct. For this, for each such $i,j$, we guess a witness $k$ and check that $A_{i,k} = B_{k,j} = 1$, which verifies that $C_{i,j}=(AB)_{i,j}=1$.

To check the remaining zero entries $Z= \{ (i,j)\in [n]^2  \mid C_{i,j} = 0\}$, we construct a 3SUM instance $S_1,S_2,S_3$ as follows. Let $W = 2(n+1)$. For each $(i,j) \in Z$, we include $iW^2 + jW$ in our set $S_3$. For every $(i,k)$ with $A_{i,k}=1$, we include $iW^2 + k$ in our set $S_1$, and, for every $(k,j)$ with $B_{k,j} = 1$, we include $jW - k$ in our set $S_2$. Clearly, any witness $A_{i,k}=B_{k,j}=1$ for $(AB)_{i,j}=1$, $(i,j)\in Z$ yields a triplet $a=i W^2 + k \in S_1, b= jW -k \in S_2, c=iW^2 + jW\in S_3$ with $a+b=c$. Conversely, any 3SUM triplet $a\in S_1, b\in S_2, c\in S_3$ yields a witness for $(AB)_{i,j}=1$, where $(i,j)\in Z$ is the zero entry represented by $c$, since $(iW^2 + k) + (jW - k') = i'W^2+j'W$ for $i,i',j,j',k,k'\in [n]$ if only if $i=i', j=j'$ and $k=k'$ by choice of $W$. Thus, the 3SUM instance is a NO instance if and only if no $(i,j)\in Z$ has a witness for $(AB)_{i,j} = 1$, i.e., all $(i,j)\in Z$ satisfy $C_{i,j} = (AB)_{i,j} = 0$.

Note that reduction runs in nondeterministic time $\Oh(n^2)$, using an oracle call of a 3SUM instance of size $\Oh(n^2)$, which yields the claim. 
\end{proof}

\subsection{UPIT}

Univariate Polynomial Identity Testing (\UPIT) is the following problem: Given arithmetic circuits $Q, Q'$ on a single variable, with degree $n$ and $O(n)$ wires, over a field of order $\poly(n)$, determine whether $Q \equiv Q'$, i.e., the outputs of $Q$ and $Q'$ agree on all inputs. Using evaluation on $n+1$ distinct points, we can deterministically solve \UPIT in time $\tOh(n^2)$, while evaluating on $\tOh(1)$ random points yields a randomized solution in time $\tOh(n)$. Williams~\cite{Williams16} proved that a $\Oh(n^{2-\varepsilon})$-time deterministic \UPIT algorithm refutes the Nondeterministic Strong Exponential Time Hypothesis posed by Carmosino et al.~\cite{CarmosinoGIMPS16}.  
We establish that a sufficiently strong (nondeterministic) derandomization of \UPIT also yields progress on \MPV. 

\begin{theorem}\label{thm:MPVtoUPIT}
If \UPIT admits a (``combinatorial'') $\Oh(n^{3/2-\varepsilon})$-time verifier for some $\varepsilon >0$, then there is a (``combinatorial'') $\Oh(n^{3-2\varepsilon})$-time verifier for matrix multiplication over polynomially bounded integers and over finite fields of polynomial order. \end{theorem}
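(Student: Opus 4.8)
The plan is to reduce a matrix-multiplication verification instance to a single \UPIT instance, so that any verifier for \UPIT (which, recall, yields both a nondeterministic and a co-nondeterministic algorithm) transfers to \MPV. By the equivalence of Proposition~\ref{prop:MPVtoAllZeroes}, it suffices to handle \AllZeroes: given $n\times n$ matrices $A,B$, decide whether $AB=0$. Using the univariate encoding $g = g^{A,B}(X)$ introduced in the Technical Overview, the entries of $AB$ are in one-to-one coefficient-preserving correspondence with the monomials of $g$, so $AB=0$ if and only if $g\equiv 0$. The key point is that $g(X)=\sum_{k=1}^n q_k(X)\,r_k(X^\ell)$ with $q_k,r_k$ of degree $<\ell=n$, and each $q_k,r_k$ is just a Horner-style polynomial in its argument; hence $g$ is computable by an arithmetic circuit $Q$ with $\Oh(n^2)$ wires and degree $\Oh(n^2)$, and we want it of ``combinatorial'' type, which it is (only additions and multiplications, no fast matrix multiplication). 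Setting $Q'\equiv 0$, we get a \UPIT instance of ``size'' $N=\Theta(n^2)$, and an $\Oh(N^{3/2-\varepsilon})$-time verifier for it runs in time $\Oh((n^2)^{3/2-\varepsilon})=\Oh(n^{3-2\varepsilon})$, as claimed.

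The steps, in order, would be: (1) invoke Proposition~\ref{prop:MPVtoAllZeroes} to pass from \MPV to \AllZeroes; (2) recall $g^{A,B}$ and verify the coefficient-preservation claim, so that $g\equiv 0 \iff AB=0$; (3) write down an explicit arithmetic circuit for $g$ from the representation $g(X)=\sum_k q_k(X)r_k(X^\ell)$ — compute $X^\ell$ once ($\Oh(\log n)$ wires), then for each $k$ evaluate $q_k$ and $r_k$ by Horner's rule ($\Oh(n)$ wires each), multiply the two, and sum over $k$; this is $\Oh(n^2)$ wires and degree $\Oh(n^2)$, and the circuit is ``combinatorial''; (4) feed $(Q, Q'\equiv 0)$ to the assumed \UPIT verifier, noting that its certificate together with the deterministic $\Oh(N^{3/2-\varepsilon})$-time verification procedure yields, for the \AllZeroes instance, a certificate and an $\Oh(n^{3-2\varepsilon})$-time checker; (5) unwind through Proposition~\ref{prop:MPVtoAllZeroes} back to \MPV, absorbing its $\Oh(n^2)$ overhead; (6) handle the finite-field case directly (the circuit and argument are field-agnostic as long as the field has order $\poly(n)$) and the integer case by choosing the circuit's field to be $\field_p$ for a prime $p$ larger than the largest possible absolute value of an entry of $AB$ (which is $\poly(n)$ since entries are polynomially bounded), so that $AB=0$ over $\ints$ iff $AB\equiv 0 \bmod p$ iff $g\equiv 0$ over $\field_p$.

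The main obstacle — really the only point needing care — is making sure the notion of ``size'' for the \UPIT instance is the right one and that the degree bound is respected. \UPIT as defined asks for degree-$n$ circuits with $\Oh(n)$ wires; our circuit has degree $\Theta(n^2)$ and $\Theta(n^2)$ wires, so we must instantiate the \UPIT parameter at $N=\Theta(n^2)$ rather than at $n$, and then the $\Oh(N^{3/2-\varepsilon})=\Oh(n^{3-2\varepsilon})$ bound drops out; one should double-check that the field order $\poly(N)=\poly(n)$ remains polynomially bounded in the \emph{original} $n$, which it does. A secondary subtlety is that a \UPIT verifier certifies membership in \emph{and} non-membership in $\{Q\equiv Q'\}$ (as emphasized in the footnote on viewing decision problems as binary functional problems), which is exactly what we need, since \AllZeroes requires correctly outputting both ``$AB=0$'' and ``$AB\ne 0$''; I would state this explicitly. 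Everything else — Horner's rule wire counts, the exponent bookkeeping $(i-1)+\ell(j-1)$ giving distinct exponents in $[0,\ell^2)$, the $\Oh(n^2)$ reduction overhead — is routine.
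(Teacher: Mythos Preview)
Your proposal is correct and follows essentially the same approach as the paper's proof: encode $AB$ via the polynomial $g(X)=\sum_k q_k(X)\,r_k(X^\ell)$, build an $\Oh(n^2)$-wire, degree-$\Oh(n^2)$ circuit for it via Horner's rule, and invoke the assumed \UPIT verifier at instance size $N=\Theta(n^2)$. The only minor deviation is your handling of the integer case via a single sufficiently large prime rather than the paper's Chinese Remaindering over $\Oh(1)$ small primes (Proposition~\ref{prop:findprimes}); this is fine in the nondeterministic setting since the prime can be guessed and its primality certified cheaply.
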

\begin{proof}
We only give the proof for matrix multiplication over a finite field $\field$ of polynomial order. Using Chinese Remaindering, we can easily extend the reduction to the integer case (see Proposition~\ref{prop:findprimes} below). 

Consider $g(X) = \sum_{i,j\in [n]} \langle a_i, b_j \rangle X^{(i-1) + n(j-1)}$ over $\field$ as defined in Section~\ref{sec:mainIdeas} (with $\ell = n$). As described there, we can write $g(X) = \sum_{k=1}^n q_k(X)r_k(X^n)$ with $q_k(Z) = \sum_{i=1}^n a_i[k]Z^{i-1}$ and $r_k(Z) = \sum_{j=1}^n b_j[k]Z^{j-1}$. Let $k\in [n]$ and note that $q_k,r_k$ and $X^n$ have arithmetic circuits with $\Oh(n)$ wires using Horner's scheme. Chaining the circuits of $X^n$ and $r_k$, and multiplying with the output of the circuit for $q_k$, we obtain a degree-$\Oh(n^2)$ circuit $Q_k$ with $\Oh(n)$ wires. It remains to sum up the outputs of the circuits $Q_1, \dots, Q_n$. We thus obtain a circuit $Q$ with $\Oh(n^2)$ wires and degree $\Oh(n^2)$. Since by construction $AB = 0$ if and only $Q \equiv 0$, we obtain an \UPIT instance $Q, Q'$, with $Q'$ being a constant-sized circuit with output $0$, that is equivalent to our \MPV instance. Thus, any $\Oh(n^{3/2-\varepsilon})$-time algorithm for \UPIT would yield a $\Oh(n^{2(3/2-\varepsilon)})$-time \MPV algorithm, as desired.
\end{proof}

It is known that refuting NSETH implies strong circuit lower bounds~\cite{CarmosinoGIMPS16}, so pursuing this route might seem much more difficult than attacking \MPV directly. However, to make progress on \MPV, we only need to nondeterministically derandomize \UPIT for very specialized circuits. In this direction, our algorithmic results exploit that we can derandomize \UPIT for these specialized circuits, as long as they represent \emph{sparse} polynomials.

\section{Deterministically Detecting Presence of \texorpdfstring{$0 < z\le t$}{0< z <= t} Errors}
\label{sec:errordetection}

In this section we prove the first of our main algorithmic results, i.e., Theorem~\ref{thm:main-detect}.
\begin{theorem}\label{thm:errordetection}
For any $1 \le t \le n^2$, $\MPV_t$ can be solved deterministically in time $O( (n^2 +tn)\log^{2+o(1)} (n))$.
\end{theorem}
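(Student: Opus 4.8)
The plan is to reduce $\MPV_t$ to testing whether a \emph{$t$-sparse} univariate polynomial is identically zero, and to perform that test by fast multipoint evaluation at consecutive powers of a high-order root of unity. First I would apply Proposition~\ref{prop:MPVtoAllZeroes} to replace the given $n\times n$ instance $A,B,C$ by $n\times 2n$ and $2n\times n$ matrices $A',B'$ with $A'B'=0$ iff $AB=C$, where $A'B'$ has at most $t$ nonzeroes. Writing $A'$ by its rows $a'_1,\dots,a'_n\in\ints^{2n}$ and $B'$ by its columns $b'_1,\dots,b'_n\in\ints^{2n}$, I would work with $g(X)=g^{A',B'}(X)=\sum_{i,j\in[n]}\langle a'_i,b'_j\rangle X^{(i-1)+n(j-1)}$, which has degree at most $n^2-1$, whose nonzero monomials biject with the nonzero entries of $A'B'$ (so $g$ has at most $t$ nonzero coefficients and $g\equiv 0$ iff $AB=C$), and which, as noted in Section~\ref{sec:mainIdeas}, factors as $g(X)=\sum_{k=1}^{2n}q_k(X)\,r_k(X^{n})$ with $q_k(Z)=\sum_{i=1}^n a'_i[k]Z^{i-1}$ and $r_k(Z)=\sum_{j=1}^n b'_j[k]Z^{j-1}$, each of degree below $n$.

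Since the entries of $A,B,C$ are polynomially bounded, every entry of $A'B'$ has absolute value $\poly(n)$; using Proposition~\ref{prop:findprimes} I would fix a prime $p=\poly(n)$ exceeding this bound, so that the support of $g \bmod p$ equals the support of $g$ over $\ints$ (hence $g\equiv 0$ over $\field_p$ iff $AB=C$), together with an element $\omega\in\field_p^\ast$ of multiplicative order at least $n^2$. The key point is a sparse zero-test: a polynomial over $\field_p$ with at most $t$ nonzero monomials, all of degree $<n^2\le\mathrm{ord}(\omega)$, either is identically zero or does not vanish at all of $\omega^0,\omega^1,\dots,\omega^{t-1}$. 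Indeed, if it has $s\le t$ nonzero terms with distinct exponents $d_1,\dots,d_s$, the values $\omega^{d_1},\dots,\omega^{d_s}$ are pairwise distinct (a coincidence would force $\mathrm{ord}(\omega)$ to divide a nonzero integer of absolute value $<n^2$), so the $s\times s$ Vandermonde matrix $\bigl(\omega^{d_\ell m}\bigr)_{0\le m<s,\,1\le\ell\le s}$ is invertible, and $g(\omega^0)=\dots=g(\omega^{s-1})=0$ forces all coefficients to vanish. Thus the algorithm evaluates $g$ over $\field_p$ at $\omega^0,\dots,\omega^{t-1}$ and reports $AB=C$ exactly when all these values are $0$.

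It remains to do this evaluation fast. Using the factorization, $g(\omega^m)=\sum_{k=1}^{2n}q_k(\omega^m)\,r_k(\omega^{mn})$, so for each $k\in[2n]$ I would compute $q_k$ at the $t$ points $\omega^0,\dots,\omega^{t-1}$ and $r_k$ at the $t$ points $\omega^0,\omega^{n},\omega^{2n},\dots,\omega^{(t-1)n}$; since $q_k,r_k$ have degree $<n$, Lemma~\ref{lem:polymultipoint} does each such batch in $\Oh((n+t)\log^{2+o(1)}n)$ field operations, for a total of $\Oh((n^2+tn)\log^{2+o(1)}n)$ over all $k$. Then, for each $m<t$, combining the $2n$ products to form $g(\omega^m)$ costs $\Oh(n)$ further operations, i.e. $\Oh(tn)$ in total. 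As $p=\poly(n)$, each field operation is $\Oh(1)$ word operations, so the overall running time is $\Oh((n^2+tn)\log^{2+o(1)}n)$, as claimed (the reduction of Proposition~\ref{prop:MPVtoAllZeroes} and the setup of $p,\omega$ contribute only lower-order $\poly(n)$ terms).

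The main obstacle is the sparse zero-test argument and the accompanying number-theoretic setup: one must ensure, via the choice of $p$ (so that integer nonzeroes stay nonzero modulo $p$) and of $\omega$ (whose order must exceed $\deg g$), that evaluating at merely $t$ points — rather than $\deg g+1=\Theta(n^2)$ points — already certifies $g\equiv 0$. Once this is established, the running time follows routinely from the efficient representation $g=\sum_k q_k(X)r_k(X^{n})$ together with off-the-shelf fast multipoint evaluation.
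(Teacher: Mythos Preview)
Your proposal is correct and mirrors the paper's proof: the reduction to $\AllZeroes_t$, the encoding as the $t$-sparse polynomial $g$, the Vandermonde-based sparse zero-test at consecutive powers of $\omega$ (the paper's Proposition~\ref{prop:central}), and the fast batch evaluation via $g=\sum_k q_k(X)r_k(X^n)$ with multipoint evaluation (the paper's Proposition~\ref{prop:fastevalg}) are exactly the ingredients used there.

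One minor slip worth flagging: Proposition~\ref{prop:findprimes} does \emph{not} hand you a single prime $p$ exceeding the entry bound of $A'B'$—the paper explicitly remarks that deterministically finding such a prime in time $\Oh(n^2)$ is not known—but rather $d=\Oh(1)$ primes $p_i=\Theta(n^2)$ together with high-order elements $\omega_i$, with the Chinese-Remainder guarantee that $A'B'=0$ over $\ints$ iff $A'B'=0$ over every $\field_{p_i}$. Since $g\bmod p_i$ still has at most $t$ nonzero coefficients, your Vandermonde argument applies over each $\field_{p_i}$ separately; running the test once per prime costs only a constant-factor overhead, so the claimed bound stands.
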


We prove the claim by showing how to solve the following problem in time $\tOh((\ell+t)n)$. 
\begin{lemma}\label{lem:mainalg}
Let $\field_p$ be a prime field with a given element $\omega\in \field_p$ of order at least~$\ell^2$.
Let $A,B$ be $\ell \times n, n\times \ell$-matrices over $\field_p$. There is an algorithm running in time $\Oh((\ell+t)n\log^{2+o(1)} n)$ with the following guarantees:
\begin{enumerate}
\item If $AB = 0$, the algorithm outputs ``$AB=0$''.
\item If $AB$ has $0 < z \le t$ nonzeroes, the algorithm outputs ``$AB \ne 0$''.
\end{enumerate}
\end{lemma}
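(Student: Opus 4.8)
The plan is to use the univariate polynomial $g(X) = g^{A,B}(X) = \sum_{i,j\in[\ell]} \langle a_i,b_j\rangle X^{(i-1)+\ell(j-1)}$ introduced in Section~\ref{sec:mainIdeas}, which has degree less than $\ell^2$ and whose nonzero coefficients are in one-to-one correspondence with the nonzero entries of $AB$. The key observation (the sparse-interpolation idea of Ben-Or--Tiwari) is this: if $g$ has at most $t$ nonzero terms, say with exponents $d_1 < \cdots < d_m$ and coefficients $c_1,\dots,c_m$ with $m \le t$, then the values $g(\omega^0),\dots,g(\omega^{t-1})$ are obtained by multiplying the (generalized) Vandermonde matrix $(\omega^{d_r s})_{0\le s\le t-1,\,1\le r\le m}$ by the coefficient vector. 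Since $\omega$ has order at least $\ell^2 > \deg g \ge d_m$, the values $\omega^{d_1},\dots,\omega^{d_m}$ are pairwise distinct, so this Vandermonde system has full column rank; hence $g(\omega^s)=0$ for all $s=0,\dots,t-1$ if and only if all $c_r = 0$, i.e.\ $AB=0$. (When $AB=0$ trivially all evaluations vanish; when $0<z\le t$ at least one evaluation is nonzero.) So the algorithm simply evaluates $g$ at the $t$ points $\omega^0,\dots,\omega^{t-1}$ and outputs ``$AB=0$'' iff all results are $0$; correctness is immediate from the above.

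The remaining work is to show these $t$ evaluations can be done in time $\Oh((\ell+t)n\log^{2+o(1)} n)$ without ever forming the coefficients $\langle a_i,b_j\rangle$ (which would already cost $\Omega(\ell^2)$). Here I would use the factored representation $g(X) = \sum_{k=1}^n q_k(X)\,r_k(X^\ell)$ with $q_k(Z)=\sum_{i=1}^\ell a_i[k]Z^{i-1}$ and $r_k(Z)=\sum_{j=1}^\ell b_j[k]Z^{j-1}$, each of degree $<\ell$. To evaluate all $q_k$ and all $r_k$ at the needed arguments: the arguments for the $q_k$ are $x_s := \omega^s$ for $s=0,\dots,t-1$, and the arguments for the $r_k$ are $\omega^{\ell s}$ for the same $s$. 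By Lemma~\ref{lem:polymultipoint}, evaluating one degree-$<\ell$ polynomial at $t$ points costs $\Oh((\ell+t)\log^{2+o(1)}\ell)$ field operations; doing this for all $2n$ polynomials $q_1,\dots,q_n,r_1,\dots,r_n$ costs $\Oh((\ell+t)n\log^{2+o(1)}n)$ field operations. Then for each fixed $s$ we form $g(\omega^s) = \sum_{k=1}^n q_k(\omega^s)\cdot r_k(\omega^{\ell s})$ in $\Oh(n)$ operations, for a total of $\Oh(tn)$ more. Since each field operation in $\field_p$ with $p=\poly(n)$ costs $\Oh(\log^{1+o(1)} n)$ word-RAM time (or $\Oh(1)$ if we are generous, but we keep the polylog for honesty), the stated running time follows. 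I should also note that a suitable prime $p$ and element $\omega$ of order $\ge \ell^2$ are assumed given, so no extra cost is incurred here.

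The main obstacle — really the only nontrivial point — is the $\Oh((\ell+t)\log^{2+o(1)}\ell)$-per-polynomial multipoint evaluation bound when $t$ may exceed $\ell$: the statement of Lemma~\ref{lem:polymultipoint} is phrased for evaluating a degree-$d$ polynomial at $d$ points, and the remark afterward extends it to $n$ points in time $\Oh((n+d)\log^{2+o(1)} d)$, which is exactly what is needed (chop the $t$ points into $\lceil t/\ell\rceil$ batches of $\le \ell$ points, paying $\Oh(\ell\log^{2+o(1)}\ell)$ per batch). Everything else is bookkeeping: confirming that $\deg g < \ell^2$ so that $\mathrm{ord}(\omega)\ge\ell^2$ indeed separates the exponents appearing in $g$, and that the two output guarantees are exactly the ``$AB=0$'' and ``$0<z\le t$'' cases. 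I would present the Vandermonde/distinctness argument first, then the evaluation-cost accounting, and conclude.
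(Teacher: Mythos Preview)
Your proposal is correct and follows essentially the same approach as the paper: the paper proves the lemma via Proposition~\ref{prop:central} (the Vandermonde/Ben-Or--Tiwari argument you describe) and Proposition~\ref{prop:fastevalg} (the batch evaluation of $g$ via the factored form $g(X)=\sum_k q_k(X)r_k(X^\ell)$ and multipoint evaluation, exactly as you outline). The only minor omission is that you should note computing the arguments $\omega^{\ell s}$ for $s=0,\dots,t-1$ costs an additional $\Oh(t\log\ell)$ field operations, which is absorbed in the stated bound.
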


Given such an algorithm working over finite fields, we can check matrix products of integer matrices using the following proposition. 
\begin{prop}\label{prop:findprimes}
Let $A,B$ be $n\times n$ matrices over the integers of absolute values bounded by $n^c$ for some $c\in \nats$. Then we can find, in time $\Oh(n^2 \log n)$, distinct primes $p_1, p_2, \dots, p_d$ and corresponding elements $\omega_1 \in \field_{p_1}, \omega_2 \in \field_{p_2}, \dots, \omega_d \in \field_{p_d}$, such that
\begin{enumerate}[label=\roman*)]
\item\label{enum:primedecomp} $AB = 0$  if and only if $AB = 0$ over $\field_{p_i}$ for all $1\le i \le d$,  
\item\label{enum:numberofprimes} $d= \Oh(1)$, and
\item for each $1\le i \le d$, we have $p_i = \Oh(n^2)$ and $\omega_i$ has order at least $n^2$ in $\field_{p_i}$.  
\end{enumerate}
\end{prop}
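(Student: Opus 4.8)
The plan is to exploit two facts: (1) the entries of $AB$ are integers of absolute value bounded by $n \cdot n^c \cdot n^c = n^{2c+1}$, hence bounded in absolute value by $N := n^{2c+1}$; and (2) a nonzero integer of absolute value at most $N$ cannot be divisible by more than $\log_q N = O(\log n)$ distinct primes $q \ge 2$, and in particular is nonzero modulo at least one prime from any family of $\omega(\log n)$ primes that exceed some fixed bound. So it suffices to take $d = O(1)$ primes if each is large enough — more precisely, if each $p_i \ge N^{1/d'}$ for a suitable constant, then any one entry of $AB$ that is nonzero over $\ints$ stays nonzero modulo at least one of $d := d'+1$ such primes, since it has at most $d'$ prime divisors exceeding $N^{1/d'}$. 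Here $d' = \lceil 2c+1 \rceil = O(1)$ works, giving item~\ref{enum:primedecomp} and item~\ref{enum:numberofprimes} simultaneously. The ``if'' direction of~\ref{enum:primedecomp} is trivial ($AB = 0$ over $\ints$ implies $AB = 0$ over every $\field_{p_i}$); the ``only if'' direction is the counting argument just described, applied entrywise: if some $(AB)_{i,j} \ne 0$ over $\ints$, then it survives modulo some $p_i$.

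Next I would arrange for each $p_i$ to be of size $\Oh(n^2)$ while still being ``large enough'' in the above sense. This needs $N^{1/d'} \le p_i = O(n^2)$, i.e. $n^{(2c+1)/d'} \le O(n^2)$, which holds for $d' \ge \lceil (2c+1)/2 \rceil$; combined with the divisor-counting requirement we simply take $d' = \max\{\lceil 2c+1\rceil, \ldots\}$ — any constant that is at least $(2c+1)/2$ and for which $d'$ primes in $[\Theta(n^2/\text{polylog})? , O(n^2)]$ suffice to kill a nonzero number of size $n^{2c+1}$; since $c$ is a fixed constant this is a fixed constant $d'$, and $d = d'+1 = O(1)$. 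To actually find $d$ distinct primes in an interval $[n^2, 2n^2]$ (say), I would invoke Bertrand-type density of primes: the interval $[n^2, 2n^2]$ contains $\Theta(n^2/\log n)$ primes, so I can sieve or trial-divide candidates; the cost of finding $O(1)$ such primes is dominated by testing $O(1)$ candidates for primality, each of magnitude $O(n^2)$, which is $\poly(\log n)$ time per candidate by e.g. deterministic trial division up to $O(n)$ — that is $O(n)$ per candidate, hence $O(n)$ total, well within the claimed $O(n^2 \log n)$ budget. (Alternatively, precompute primes up to $O(n^2)$ by a sieve in $O(n^2 \log\log n)$ time.)

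For the generator $\omega_i$ of item~iii), I would find, for each chosen prime $p_i$, an element of multiplicative order at least $n^2$ in $\field_{p_i}^\times$. Since $|\field_{p_i}^\times| = p_i - 1 = \Theta(n^2) \ge n^2$ (choosing $p_i \ge n^2+1$), it is enough to exhibit a primitive root of $\field_{p_i}$, which has order exactly $p_i - 1 \ge n^2$. To find a primitive root deterministically I would factor $p_i - 1$ (which, being of size $O(n^2)$, factors in $O(n)$ time by trial division), then test candidates $a = 2, 3, 4, \ldots$: $a$ is a primitive root iff $a^{(p_i-1)/q} \ne 1$ for every prime $q \mid p_i-1$, and each such test is a modular exponentiation costing $O(\log n)$ field multiplications, i.e. $\poly(\log n)$ bit operations. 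The smallest primitive root is $O(\poly\log p_i)$ under standard heuristics, but to stay unconditional I can instead avoid primitive roots entirely: I only need order $\ge n^2$, not order exactly $p_i-1$; and in fact the proposition's size bound leaves slack — I may simply choose each $p_i$ to be a prime with $p_i - 1$ having a large prime factor, or even sidestep the whole issue by picking $p_i$ from an interval $[C n^2, 2 C n^2]$ for a constant $C$ large enough that $\field_{p_i}^\times$ is guaranteed (by an averaging/counting bound on the number of low-order elements: at most $\sum_{m \le n^2} m = O(n^4)$ ? — no) ... The cleanest route, which I expect to be the main technical obstacle to state carefully, is: take $p_i$ with $p_i > n^2$ and just search for any $a \in \{2,\dots,p_i-1\}$ whose order exceeds $n^2$ by computing $a, a^2, \dots, a^{n^2}$ incrementally (costing $O(n^2 \log n)$ time) and checking none equals $1$; such an $a$ exists because the number of elements of order $\le n^2$ is at most $\sum_{m \mid p_i - 1,\, m \le n^2} \varphi(m) \le \sum_{m \le n^2} m < p_i - 1$ once $p_i$ is a sufficiently large constant multiple of $n^2$ — wait, $\sum_{m \le n^2} \varphi(m) = \Theta(n^4)$, so this bound is too weak and I must instead use that elements of order dividing a fixed $m$ number at most $m$, and $\sum_{m \mid p_i-1, m \le n^2} m \le \tau(p_i-1)\cdot n^2 = n^{2+o(1)} < p_i - 1$ when $p_i = \Theta(n^{2}\log n)$ or so. This fits the $O(n^2)$ bound up to the logarithmic slack already present in the paper's statement (which only claims $p_i = O(n^2)$), and the incremental order computation fits the $O(n^2 \log n)$ time bound. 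Summing the three costs — sieving/primality $O(n^2\log\log n)$, factoring $O(n)$ per prime, order search $O(n^2\log n)$ per prime, times $d = O(1)$ primes — gives total time $O(n^2 \log n)$, as claimed.
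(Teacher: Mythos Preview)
For items~(i)--(ii) and the prime search, your argument is essentially the paper's, just phrased more circuitously. The paper directly observes $|(AB)_{i,j}|\le n^{2c+1}$ and takes $d=c+1$ primes $p_i>n^2$, so that $\prod_i p_i\ge n^{2d}>n^{2c+1}$ and Chinese remaindering gives~(i); your prime-divisor counting is an equivalent restatement. Finding the $d$ primes in $[n^2+1,\,2^d(n^2+1)]$ by a sieve is also what the paper does.

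The genuine gap is in producing the $\omega_i$. You correctly recognise that testing $a=2,3,\dots$ for primitivity via $a^{(p_i-1)/q}\ne 1$ needs an unconditional bound on the least primitive root that we do not have. But your fallback---scanning $a=2,3,\dots$ and computing $a,a^2,\dots,a^{n^2}$ until one avoids $1$---only establishes \emph{existence} of high-order elements; it gives no bound on where the first one appears in the scan. Even if at most $\tau(p_i-1)\cdot n^2$ elements have order $\le n^2$, nothing prevents all of them from preceding every high-order element in the enumeration $2,3,4,\dots$, and each such candidate costs up to $\Theta(n^2)$ to reject, so the search may take $n^{4+o(1)}$ time rather than the $O(n^2\log n)$ you assert. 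Inflating $p_i$ to $\Theta(n^2\log n)$ neither repairs this nor respects the claimed $p_i=O(n^2)$. The paper sidesteps the least-primitive-root issue by a different device: maintain $L=\field_{p_i}^\times$ as a Boolean table; repeatedly pick any $\alpha\in L$ and delete all of $\langle\alpha\rangle$ from $L$; the last $\alpha$ picked is necessarily a primitive root, since a generator lies in no proper subgroup and is therefore removed only in an iteration where $\langle\alpha\rangle=\field_{p_i}^\times$, which empties $L$. The subgroups encountered are pairwise distinct, so the number of iterations is bounded by the number of subgroups of the cyclic group $\field_{p_i}^\times$, i.e., by $\tau(p_i-1)$; each iteration enumerates one subgroup in time $O(p_i)$, giving total time well within the stated budget.
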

Note that the obvious approach of choosing a single prime field $\field_p$ with $p \ge n^{2c+1}$ is not feasible for our purposes: the best known deterministic algorithm to find such a prime takes time $n^{c/2+o(1)}$ (see~\cite{TaoCH12} for a discussion), quickly exceeding our desired time bound of $\Oh(n^2)$.

\begin{proof}[Proof of Proposition~\ref{prop:findprimes}]
Let $d = c+1$ and note that any entry $(AB)_{i,j} = \sum_{k=1}^n A_{i,k}B_{k,j}$ is in $[-n^{2c+1}, n^{2c+1}]$. Thus for any number $m > n^{2c+1}$, we have $(AB)_{ij} \equiv 0 \pmod m$ if and only if $(AB)_{i,j} = 0$. By Chinese Remaindering, we obtain that any distinct primes $p_1, \dots, p_d$ with $p_i \ge n^{2}$ satisfy~\ref{enum:primedecomp} and~\ref{enum:numberofprimes}, as $AB = 0$ if and only if $AB = 0$ over $\field_{p_i}$ for all $1\le i \le d$, using the fact that $\prod_{i=1}^d p_i \ge n^{2d} > n^{2c+1}$.

By Bertrand's postulate, there are at least $d$ primes in the range $\{n^2+1,\dots, 2^d (n^2+1)\}$, thus using the sieve of Eratosthenes, we can find $p_1, \dots, p_d$ with $p_i \ge n^2 + 1$ and $p_i \le 2^d(n^2+1)$ in time $\Oh(n^2 \log \log n)$ (see \cite[Theorem 18.10]{vzGathenG13}). It remains to find elements $\omega_1\in \field_{p_1},\dots,\omega_d \in \field_{p_d}$ of sufficiently high order. For each $1\le j \le d$, this can be achieved in time $\Oh(n^2 \log n)$ by exhaustive testing: We keep a list $L\subseteq \field_{p_j}^\times = \field_{p_j} \setminus \{0\}$ of ``unencountered'' elements, which we initially set to $\field_{p_j}^\times$. Until there are no elements in $L$ remaining, we pick any $\alpha \in L$ and delete all elements in the subgroup of $\field_{p_j}^\times$ generated by $\alpha$ from $L$. We set $\omega_j$ to the last $\alpha$ that we picked (which has to generate the complete multiplicative group $\field_{p_j}^\times$) and thus is a primitive $(p_{j}-1)$-th root of unity. Since $p_{j}-1 \ge n^2$, the order of $\omega_j$ is at least~$n^2$, as desired. 

Storing $L$ as a Boolean lookup table over $\field_{p_j}^\times$, we can perform each iteration in time $\Oh(p_j)$. Furthermore, observe that the number of iterations is bounded by the number of subgroups of~$\field_{p_j}^\times$, and thus by the number of divisors of $p_{j}-1$. Hence, we have at most $\Oh(\log p_j)$ iterations, each taking time at most $\Oh(p_j)$, yielding a running time of $\Oh(p_j \log p_j) = \Oh(n^2 \log n)$, as desired.
\end{proof}

Combining Proposition~\ref{prop:MPVtoAllZeroes} with the algorithm of Lemma~\ref{lem:mainalg} and Proposition~\ref{prop:findprimes}, we obtain the theorem.
\begin{proof}[Proof of Theorem~\ref{thm:errordetection}]
Given any instance $A,B,C$ of $\MPV_t$, we convert it to an instance $A',B'$ of \AllZeroes as in Proposition~\ref{prop:MPVtoAllZeroes}. We construct primes $p_1,\dots, p_d$ as in Proposition~\ref{prop:findprimes} in time $\Oh(n^2\log n)$. For each $j \in [d]$, we convert $A'$, $B'$ to matrices over $\field_{p_j}$ in time $\Oh(n^2)$ and test whether $A'B' = 0$ over $\field_{p_j}$ for all $j\in [d]$ using Lemma~\ref{lem:mainalg} in time $\Oh((n^2+tn)\log^{2+o(1)} n)$. We output ``$AB=C$'' if and only if all tests succeeded. Correctness follows from Proposition~\ref{prop:findprimes} and Lemma~\ref{lem:mainalg}, and the total running time is $\Oh((n^2 + tn)\log^{2+o(1)} n)$, as desired. 
\end{proof}

In the remainder, we prove Lemma~\ref{lem:mainalg}.
As outlined in Section~\ref{sec:mainIdeas}, define the polynomial $g(X) = \sum_{i,j\in [\ell]} \langle a_i, b_j \rangle X^{(i-1) + \ell(j-1)}$ over $\field_p$. We aim to determine whether $g\equiv 0$. To do so, we use the following idea from Ben-Or and Tiwari's approach to black-box sparse polynomial interpolation (see~\cite{Ben-OrT88, Zippel90}). 
Suppose that $\omega \in \field_p$ has order at least $\ell^2$. Then the following proposition holds. 
\begin{prop}\label{prop:central}
Assume $AB$ has $0 \le z\le t$ nonzeroes.
Then $g(\omega^0) = g(\omega) = g(\omega^2) = \cdots = g(\omega^{t-1}) = 0$ if and only if $g\equiv 0$, i.e., $z=0$.
\end{prop}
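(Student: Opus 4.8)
The plan is to view $g$ as a sparse polynomial and apply a vanishing argument based on a Vandermonde-type nonsingularity, which is the standard engine behind Ben-Or--Tiwari interpolation. The ``if'' direction is trivial: if $g \equiv 0$ then every evaluation $g(\omega^r)$ is zero. So the content is the ``only if'' direction, and here I would argue contrapositively: assume $z \ge 1$, i.e.\ $AB$ has exactly $z$ nonzero entries with $1 \le z \le t$, and show that at least one of $g(\omega^0), g(\omega^1), \dots, g(\omega^{t-1})$ is nonzero. Since it suffices to exhibit one nonzero among the first $t$ evaluations and $z \le t$, it is actually enough to look at the first $z$ of them, $g(\omega^0), \dots, g(\omega^{z-1})$.

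First I would set up notation: let the nonzero entries of $AB$ be at positions $(i_1,j_1), \dots, (i_z,j_z)$, with values $c_1, \dots, c_z \in \field_p \setminus \{0\}$, and let $e_m := (i_m - 1) + \ell(j_m - 1)$ be the exponent that the monomial $x_{i_m} y_{j_m}$ is mapped to under the substitution defining $g$. As noted in Section~\ref{sec:mainIdeas}, this substitution is injective on the monomials $x_i y_j$ with $i,j \in [\ell]$, hence the $e_m$ are pairwise distinct, and $g(X) = \sum_{m=1}^z c_m X^{e_m}$. Crucially, each $e_m$ lies in the range $\{0, 1, \dots, \ell^2 - 1\}$ (since $i_m, j_m \le \ell$), and the order of $\omega$ is at least $\ell^2$, so the elements $\omega^{e_1}, \dots, \omega^{e_z}$ are pairwise distinct elements of $\field_p$: indeed $\omega^{e_m} = \omega^{e_{m'}}$ would force $e_m \equiv e_{m'}$ modulo the order of $\omega$, which is $\ge \ell^2 > |e_m - e_{m'}|$, giving $e_m = e_{m'}$ and hence $m = m'$.

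Now evaluating at $\omega^0, \omega^1, \dots, \omega^{z-1}$ yields the linear system
\[
\begin{pmatrix} g(\omega^0) \\ g(\omega^1) \\ \vdots \\ g(\omega^{z-1}) \end{pmatrix}
= V \begin{pmatrix} c_1 \\ c_2 \\ \vdots \\ c_z \end{pmatrix},
\qquad
V = \bigl( (\omega^{e_m})^{r} \bigr)_{0 \le r \le z-1,\; 1 \le m \le z},
\]
where $V$ is a $z \times z$ Vandermonde matrix in the nodes $\omega^{e_1}, \dots, \omega^{e_z}$. Since these nodes are pairwise distinct, $\det V = \prod_{m < m'} (\omega^{e_{m'}} - \omega^{e_m}) \ne 0$ in $\field_p$, so $V$ is invertible. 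As the vector $(c_1, \dots, c_z)^\top$ is nonzero (all $c_m \ne 0$, and $z \ge 1$), the product $V (c_1, \dots, c_z)^\top$ is nonzero, so some $g(\omega^r)$ with $0 \le r \le z-1 \le t-1$ is nonzero. This proves the contrapositive and hence the proposition.

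The only mild subtlety — and the step I would flag — is the injectivity/distinctness bookkeeping: one must be careful that the substitution $x_i \mapsto X^{i-1}$, $y_j \mapsto X^{\ell(j-1)}$ really does map the $\ell^2$ monomials $x_i y_j$ to $\ell^2$ distinct exponents, all within $\{0, \dots, \ell^2-1\}$ (this is just base-$\ell$ digit representation), and then that the order bound $\ge \ell^2$ on $\omega$ promotes ``distinct exponents'' to ``distinct evaluation nodes $\omega^{e_m}$'', which is exactly what makes the Vandermonde matrix nonsingular. Everything else is the textbook Vandermonde argument. No calculations beyond the determinant formula are needed.
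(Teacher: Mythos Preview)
Your proof is correct and follows essentially the same approach as the paper: write $g$ as a $z$-sparse polynomial, observe that the evaluations at $\omega^0,\dots,\omega^{z-1}$ form a Vandermonde system in the nodes $\omega^{e_m}$, and use the order bound on $\omega$ to ensure these nodes are distinct so the system is nonsingular. The paper phrases it as ``the system has unique solution $c_{m_1}=\cdots=c_{m_z}=0$'' rather than your contrapositive, but the argument is identical.
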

\begin{proof}
By assumption on $A, B$, we have $g(X) = \sum_{m \in M} c_m X^{m}$, where $M = \{ (i-1) + \ell (j-1) \mid \langle a_i, b_j \rangle \ne 0\}$ with $|M| = z \le t$ and $c_{(i-1) + \ell (j-1)} = \langle a_i, b_j \rangle$. Writing $M= \{m_1, \dots, m_z\}$ and defining $v_m = \omega^m$, we see that $g(\omega^0) = \cdots = g(\omega^{t-1}) = 0$ is equivalent to
\begin{alignat*}{4}
& c_{m_1} & +  \cdots & + c_{m_z} &&= 0,  \\
& c_{m_1}  v_{m_1} & +  \cdots & + c_{m_z} v_{m_z} && = 0,\\
& c_{m_1}  v^2_{m_1} & +  \cdots & + c_{m_z} v^2_{m_z} && = 0,\\
& &  & & &\dots \\
& c_{m_1}  v^{t-1}_{m_1} &+  \cdots & + c_{m_z} v^{t-1}_{m_z} && = 0.
\end{alignat*}
Since $\omega$ has order at least $\ell^2$, we have that $v_m = \omega^{m} \ne \omega^{m'} = v_{m'}$ for all $m, m' \in M$ with $m\ne m'$. Thus the above system is a Vandermonde system with unique solution $(c_{m_1}, \dots, c_{m_z}) = (0, \dots, 0)$, since $z\le t$. This yields the claim.
\end{proof}

It remains to compute $g(\omega^0), \dots, g(\omega^{t-1})$ in time $\tOh((\ell+t) n)$.

\begin{prop}\label{prop:fastevalg}
For any $\sigma_1, \dots, \sigma_t\in \field_p$, we can compute $g(\sigma_1), \dots, g(\sigma_t)$ in time $\Oh((\ell+t)n \log^{2+o(1)} \ell)$.
\end{prop}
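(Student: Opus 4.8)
The plan is to evaluate $g$ by exploiting its structured representation $g(X) = \sum_{k=1}^n q_k(X)\, r_k(X^\ell)$, where $q_k(Z) = \sum_{i=1}^\ell a_i[k]Z^{i-1}$ and $r_k(Z) = \sum_{j=1}^\ell b_j[k]Z^{j-1}$ are degree-$(\ell-1)$ polynomials. Both $q_k$ and $r_k$ have degree less than $\ell$, so I want to use fast multipoint evaluation (Lemma~\ref{lem:polymultipoint}) to evaluate each of them at the relevant arguments. The subtlety is that $r_k$ is not evaluated at the points $\sigma_1,\dots,\sigma_t$ directly, but at the points $\sigma_1^\ell, \dots, \sigma_t^\ell$; these are just $t$ field elements we can precompute in $\Oh(t\log \ell)$ time by repeated squaring, so this causes no difficulty.

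The key steps, in order, are as follows. First, compute the auxiliary points $\tau_s := \sigma_s^\ell$ for $s \in [t]$. Second, for each $k \in [n]$: read off the coefficient lists $(a_1[k],\dots,a_\ell[k])$ and $(b_1[k],\dots,b_\ell[k])$ of $q_k$ and $r_k$ (these are just the $k$-th column of $A$ and $k$-th row of $B$), and use Lemma~\ref{lem:polymultipoint} to compute the vectors $\bigl(q_k(\sigma_1),\dots,q_k(\sigma_t)\bigr)$ and $\bigl(r_k(\tau_1),\dots,r_k(\tau_t)\bigr)$. Evaluating a degree-$(\ell-1)$ polynomial at $t$ points costs $\Oh((\ell+t)\log^{2+o(1)}\ell)$ field operations by the remark following Lemma~\ref{lem:polymultipoint} (splitting the $t$ points into $\Oh(t/\ell)$ batches of size $\ell$ when $t > \ell$, or padding when $t<\ell$). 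Third, accumulate: for each $s \in [t]$, compute $g(\sigma_s) = \sum_{k=1}^n q_k(\sigma_s)\, r_k(\tau_s)$, which is $\Oh(nt)$ field operations total. Summing the costs over all $k$: step two costs $n\cdot \Oh((\ell+t)\log^{2+o(1)}\ell) = \Oh((\ell+t)n\log^{2+o(1)}\ell)$, and step three is $\Oh(nt)$, absorbed into the former; step one is lower order. Since each field operation in $\field_p$ with $p = \poly(\ell)$ costs $\Oh(1)$ on the word RAM (or, being generous, polylogarithmic, which is swallowed by the $\log^{o(1)}$ slack), this gives the claimed bound $\Oh((\ell+t)n\log^{2+o(1)}\ell)$.

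The main thing to get right — rather than a genuine obstacle — is the bookkeeping when $t$ and $\ell$ are of very different magnitudes, to make sure the bound $(\ell+t)n$ is clean in both regimes: when $t \ge \ell$ we run $\lceil t/\ell\rceil$ multipoint evaluations of a degree-$(\ell-1)$ polynomial at $\ell$ points each, costing $\Oh(t \log^{2+o(1)}\ell)$ per $k$; when $t < \ell$ we pad the point set to size $\ell$ and pay $\Oh(\ell\log^{2+o(1)}\ell)$ per $k$. Either way the per-$k$ cost is $\Oh((\ell+t)\log^{2+o(1)}\ell)$, as needed. Once Proposition~\ref{prop:fastevalg} is in hand, Lemma~\ref{lem:mainalg} follows immediately by applying it with $\sigma_s = \omega^{s-1}$ for $s\in[t]$ and invoking Proposition~\ref{prop:central}.
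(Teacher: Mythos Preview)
Your proof is correct and essentially identical to the paper's: both use the representation $g(X)=\sum_{k=1}^n q_k(X)\,r_k(X^\ell)$, precompute the powers $\sigma_s^\ell$ in $\Oh(t\log\ell)$ time, apply fast multipoint evaluation to each $q_k$ and $r_k$ at cost $\Oh((\ell+t)\log^{2+o(1)}\ell)$ per $k$, and then aggregate in $\Oh(tn)$ operations. Your explicit batching discussion for the regimes $t\ge\ell$ versus $t<\ell$ is a bit more detailed than the paper, which simply cites the $\Oh((n+d)\log^{2+o(1)}d)$ bound stated after Lemma~\ref{lem:polymultipoint}, but the argument is the same.
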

\begin{proof}
Recall that $g(X) = \sum_{k=1}^n q_k(X)\cdot r_k(X^\ell)$, where $q_k(Z) = \sum_{i = 1}^{\ell} a_i[k]Z^{i-1}$ and $r_k(Z) = \sum_{j=1}^{\ell} b_j[k]Z^{j-1}$. Let $1\le k \le n$. Using fast multipoint evaluation (Lemma~\ref{lem:polymultipoint}), we can compute $q_k(\sigma_1), \dots, q_k(\sigma_t)$ using $\Oh((\ell+t)\log^{2+o(1)} \ell)$ additions and multiplications in $\field_p$. Furthermore, since we can compute $\sigma_1^\ell, \dots, \sigma_t^\ell$ using $\Oh(t\log \ell)$ additions and multiplications in~$\field_p$, we can analogously compute  $r_k(\sigma_1^\ell), \dots, r_k(\sigma_t^\ell)$ in time $\Oh((\ell+t) \log^{2+o(1)} \ell)$. Doing this for all $1\le k \le n$ yields all values $q_k(\sigma_u), r_k(\sigma_u^\ell)$ with $k\in [n],u\in [t]$ in time $\Oh((\ell+t)n\log^{2+o(1)} \ell)$. We finally aggregate these values to obtain the desired outputs $g(\sigma_u) = \sum_{k=1}^n q_k(\sigma_u)\cdot r_k(\sigma_u^\ell)$ with $u \in [t]$. The aggregation only uses $\Oh(tn)$ multiplications and additions in $\field_p$, thus the claim follows.
\end{proof}

Together with Proposition~\ref{prop:central}, this yields Lemma~\ref{lem:mainalg} and thus the remaining step of the proof of Theorem~\ref{thm:errordetection}.

\section{Deterministic Output-sensitive Matrix Multiplication}
\label{sec:errorcorrection}

In this section, we prove the second of our main algorithmic results, specifically, Theorem~\ref{thm:main-correct}.
\begin{theorem}\label{thm:errorcorrection}
We can solve $\osMM_t$ in time $\Oh(\sqrt{t}n^2 \log^{2+o(1)} n + t^2 \log^{3+o(1)} n)$.
\end{theorem}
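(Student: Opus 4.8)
By Proposition~\ref{prop:MPVtoAllZeroes}, it suffices to solve $\osMM_t$ directly; equivalently, $\MPC_t$. The plan is to build on the error-detection machinery of Section~\ref{sec:errordetection}: recall that $g^{A,B}(X) = \sum_{i,j\in[\ell]}\langle a_i,b_j\rangle X^{(i-1)+\ell(j-1)}$ admits the succinct representation $g(X) = \sum_{k=1}^n q_k(X)r_k(X^\ell)$, and that by Proposition~\ref{prop:central} the vanishing of $g$ on $\omega^0,\dots,\omega^{t-1}$ (for $\omega$ of order $\ge \ell^2$) certifies $AB=0$, while by Proposition~\ref{prop:fastevalg} we can batch-evaluate $g$ at $t$ points in time $\tOh((\ell+t)n)$. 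The overall strategy is a recursive search: split the $\ell\times\ell$ index grid into four $(\ell/2)\times(\ell/2)$ quadrants, run an \AllZeroes-type test on each quadrant, recurse on those that are nonzero, and at the leaves read off nonzero entries $(i,j)$ via $\Oh(n)$-time computation of $\langle a_i,b_j\rangle$. Over integers, first apply Proposition~\ref{prop:findprimes} to reduce to $\Oh(1)$ prime fields $\field_{p_i}$ each carrying an $\omega_i$ of order $\ge n^2$; a position $(i,j)$ is nonzero over $\ints$ iff it is nonzero over some $\field_{p_i}$, so we run the search over each field and union the results.

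\textbf{The two naive recursions and why they fail.} As sketched in Section~\ref{sec:mainIdeas}, the single-nonzero-finding recursion costs $\tOh(tn^2)$ (one must pay a full $t$-point test down each of $t$ root-to-leaf paths), while the all-nonzeroes-at-once recursion costs $\tOh(\sqrt{t}n^2 + nt^2)$: the $\sqrt{t}n^2$ term comes from the fact that at each level the submatrices still containing nonzeroes have total ``$n$-dimension'' $\Oh(n)$ and there are $\Oh(\log n)$ levels, but we only ever recurse into $\Oh(\sqrt{t})$ submatrices that are ``large'' (side length $\ge \sqrt{t}$) because a submatrix of side $s$ contains at most $s^2$ nonzeroes, so once $s^2 \le t$ we can afford to list it; meanwhile the $nt^2$ term is the cost of performing up to $t$ separate \AllZeroes tests, each a $t$-point batch evaluation costing $\tOh(nt)$. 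The crucial observation is that the $t$ in the ``$t$-point'' is wasteful: a submatrix of side $s$ only needs to be tested for $\min(s^2,t)$ nonzeroes, but even summing $\min(s^2,t)$ over the $\Theta(t)$ surviving submatrices at the bottom levels is still $\Theta(t^2)$ worth of evaluation points, each costing $\tOh(n)$, giving $\tOh(nt^2)$.

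\textbf{The refinement: global evaluation with dynamic updates.} To beat $\tOh(nt^2)$, the idea is to avoid recomputing batch evaluations from scratch for each submatrix. Instead, maintain a single global data structure holding the evaluations of the relevant $q_k,r_k$ polynomials (restricted to row/column index ranges) at a fixed pool of $\Oh(t)$ evaluation points $\omega^0,\dots,\omega^{t-1}$, costing $\tOh((\ell+t)n)=\tOh(n^2+tn)$ once. Then a test of any submatrix $(AB)_{I,J}$ for $z$ nonzeroes amounts to aggregating the already-computed $q_k$-values over $k$ and $i\in I$ and $r_k$-values over $k$ and $j\in J$ at the first $z$ points — but this aggregation itself naively costs $\Oh(zn)$ per submatrix. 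The fix is that after we \emph{find} a nonzero entry $(i,j)$ and want to ``remove'' it (as in Proposition~\ref{prop:MPVtoAllZeroes}, extending the matrices by one column/row), we do not recompute everything: the change to each $g$-value is an easily-updated rank-one correction $-\omega^{(i-1)+\ell(j-1)\cdot?}$-type term, so each update costs only $\tOh(t)$ (one per evaluation point), for a total of $\tOh(t^2)$ over all $t$ removals. Partial sums over the dyadic index ranges $I,J$ arising in the recursion can be precomputed/maintained in a segment-tree fashion so that each submatrix test reads off its aggregate in $\tOh(t)$ rather than $\tOh(zn)$ time; since the recursion visits $\Oh(t\log n)$ submatrices, this contributes another $\tOh(t^2)$. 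Balancing: the one-time global evaluation gives $\tOh(n^2 + tn)$, the $\Oh(\sqrt{t})$ ``large'' recursive subproblems each requiring a fresh $\tOh(\sqrt{t}\cdot n)$-ish evaluation over their index ranges sum to $\tOh(\sqrt{t}n^2)$, and the dynamic-update and submatrix-aggregation bookkeeping contributes $\tOh(t^2)$, yielding the claimed $\Oh(\sqrt{t}n^2\log^{2+o(1)}n + t^2\log^{3+o(1)}n)$.

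\textbf{Main obstacle.} The delicate part is making the ``reuse and quickly update previously computed information'' precise: one must choose the right granularity at which to cache batch evaluations (per dyadic row-interval and column-interval, at the shared point pool) so that (a) the one-time cost is $\tOh(n^2+tn)$, (b) each of the $\Oh(\sqrt{t})$ deeper recursive re-evaluations is charged only $\tOh(\sqrt{t}\,n)$ amortized, and (c) each of the $\le t$ nonzero-removals updates every cached value in total $\tOh(t)$ time, so the maintenance telescopes to $\tOh(t^2)$ rather than $\tOh(t^2\cdot\text{depth}\cdot n)$. Getting the interaction between the recursion tree, the segment-tree-style partial sums over index ranges, and the rank-one update after each discovered error to all compose without an extra factor of $n$ or $t$ is where the real work lies; everything else is fast multipoint evaluation (Lemma~\ref{lem:polymultipoint}) plus the Vandermonde argument of Proposition~\ref{prop:central} applied to submatrices.
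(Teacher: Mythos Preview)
Your high-level strategy---recursive quadrant search plus rank-one updates after each discovered nonzero---matches the paper, and you correctly diagnose why both naive recursions fall short. But the mechanism you propose to beat the $\tOh(nt^2)$ bottleneck is not the one the paper uses, and as stated it does not work. Your claim is that ``segment-tree-style partial sums over the dyadic index ranges $I,J$'' let each submatrix test be read off in $\tOh(t)$ time. The obstruction is the \emph{inner} dimension: $g^{I,J}(\omega^\nu) = \sum_{k=1}^{2n} q_k^I(\omega^\nu)\, r_k^J(\omega^{\nu|I|})$ is a sum over $k\in[2n]$, and segment trees over the row/column index sets $I,J$ do nothing to collapse it. Even with every $q_k^I(\omega^\nu)$ and $r_k^J(\omega^{\nu|I|})$ cached, forming a single test value costs $\Theta(n)$, so a $t$-point test of one submatrix is $\Theta(nt)$ and over $\Theta(t)$ submatrices you are back at $\Theta(nt^2)$. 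Caching the $g^{I,J}$-values themselves for all dyadic $(I,J)$ at $t$ points would need $\Theta(n^2 t)$ work just to set up.

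The paper's actual device is an \emph{exponential search on the granularity} $\tau^{I,J}$: it is initialized to $|I|$ and doubled whenever the current test values fail to exhibit a nonzero child. The crucial invariant (Lemma~\ref{lem:timeprops}\ref{enum:granularitybound}) is that $\tau^{I,J}\le\max\{|I|,2z(I,J)\}$, where $z(I,J)$ is the number of nonzeroes of $(AB)_{I,J}$; since computing the test values of the children of $(I,J)$ at granularity $\tau^{I,J}$ costs $\tOh(\tau^{I,J} n)$ via Proposition~\ref{prop:fastevalg}, summing $(|I|+z(I,J))\cdot n$ over the recursion tree yields the $\tOh(\sqrt{t}n^2)$ term. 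The subtle point you flag---that vanishing test values at a small granularity do not certify emptiness---is handled not by any segment-tree shortcut but by the list $L$ and the \UpdateVals routine: after each removal, the stored test values along the root-to-leaf path are corrected in $\tOh((n+t)\log n)$ time (Lemma~\ref{lem:updatetime}), so a submatrix is deleted from $L$ only when its test values at the \emph{parent's} granularity all vanish, and \FindNonzero is only ever invoked on submatrices known to contain a nonzero. This is the missing idea; without it, nothing in your outline bounds the number of evaluation points per submatrix below $t$.
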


Recall that $A,B$ are $n\times n$ matrices, where $A$ has rows $a_1, \dots, a_n$ and $B$ has columns $b_1,\dots,b_n$. Without loss of generality, we assume that $n$ is a power of two. Furthermore, for ease of presentation, we only consider computing $AB$ over a prime field $\field_p$ with $p = \Theta(n^2)$ and a given element $\omega \in \field_p$ of order at least $n^2$. Using Proposition~\ref{prop:findprimes}, it is straightforward to adapt our approach to work for polynomially bounded integer matrices as well, analogously to the proof of Theorem~\ref{thm:errordetection}.

\paragraph*{Iterative matrix structure}
The algorithm will sequentially find nonzero entries, compute the correct values to record them in the result matrix $C$ and repeat until all nonzeroes are found. To ``remove'' already found nonzeroes from $AB$, we define (as in Proposition~\ref{prop:MPVtoAllZeroes}) the $n\times 2n, 2n\times n$ matrices $A',B'$ (depending on $A,B$ and the current state of $C$) by
\[A' = \left(\begin{array}{cc} A & C\end{array}\right), B' = \left(\begin{array}{c} B \\ -I\end{array}\right).\]
Let $a'_i$ be the $i$-th row of $A'$ and $b'_j$ the $j$-th column of $B'$, then $\langle a'_i, b'_j \rangle \ne 0$ if and only if $(AB)_{i,j} \ne C_{i,j}$. In particular, consider the following process after initializing $C \leftarrow 0$: Until $A'B' = 0$, we find any nonzero entry $(A'B')_{i,j}\ne 0$, and set $C_{i,j} = (AB)_{i,j}$. If $AB$ has $z$ nonzero entries, this process terminates after $z$ iterations with $C = AB$.

\paragraph*{Canonical submatrices}
We operate on \emph{submatrices} of $A'B'$ specified by an interval $I \subseteq [n]$ of rows of $A'$ and an interval $J \subseteq [n]$ of columns of $B'$. Frequently, we write an even-sized interval $I$ as the disjoint union $I_1\cup I_2$, where $I_1$ denotes the half of smaller elements and $I_2$ denotes the half of larger elements. For any submatrix $I,J$ with $|I|=|J|=2^\kappa$ for some integer $\kappa \ge 1$, we call $I_a,J_b$ with $a,b\in \{1,2\}$ a \emph{child submatrix} of $I,J$. Correspondingly, $I,J$ is called the \emph{parent submatrix} of $I_a,J_b$ for $a,b\in \{1,2\}$. We say that $I,J$ is a \emph{canonical submatrix} if $I=J=[n]$ or $I,J$ is a child submatrix of a canonical submatrix.

\paragraph*{A first failed approach}
Using the \AllZeroes test of Lemma~\ref{lem:mainalg}, we can check whether $A'B'$ contains a nonzero in time $\tOh(n^2 + tn)$. If this is the case, we can detect some of the four child submatrices of $[n],[n]$ containing at least one nonzero in time $\tOh(n^2 + tn)$ and recurse on it. In this way we can find a single nonzero in time $\tOh(n^2+tn)$, yielding a $\tOh(tn^2 + t^2 n)$-time solution to compute all nonzeroes, which is much slower than our desired bound. To improve upon this running time, we introduce the notion of \emph{test values} for submatrices and show how to reuse test values computed in a previous iteration.

\paragraph*{Test values}
For any canonical submatrix $I= \{i+1,\dots, i+\ell\}, J = \{j+1,\dots, j+\ell\}$, we define
\begin{align*}
q^{I}_k(Z) & = \sum_{s=1}^\ell a'_{i+s}[k] Z^{s-1}, & & \text{where } k\in [2n], \\
r^{J}_k(Z) & = \sum_{s=1}^\ell b'_{j+s}[k] Z^{s-1}, & & \text{where } k\in [2n], \\
g^{I,J}(X) & = \sum_{k=1}^{2n} q^{I}_k(X) \cdot r^{J}_k(X^\ell).
\end{align*}
Recall that Proposition~\ref{prop:central} yields that for any $\omega \in \field_p$ of order at least $n^2$, we have 
\begin{enumerate}[label=(\Roman{*})]
\item \label{enum:soundness} \emph{soundness}: if  $g^{I,J}(\omega^{\nu}) \ne 0$ for some $\nu$, then $(A'B')_{I,J}$ contains at least one nonzero entry, and
\item \label{enum:completeness} \emph{completeness}: if the submatrix $(A'B')_{I,J}$ contains $z>0$ nonzeroes, then there is some $0 \le \nu < z$ such that  $g^{I,J}(\omega^{\nu}) \ne 0$. 
\end{enumerate}

Let $I,J$ be any canonical submatrix of $A'B'$ and $\tau\ge 1$. we call the $\tau$ values $\gamma^{I,J}_0 = g^{I,J}(\omega^{0}), \dots, \gamma^{I,J}_{\tau-1} = g^{I,J}(\omega^{\tau-1})$ the \emph{test values for $I$,$J$ at granularity $\tau$}. We assign to each canonical submatrix $I,J$ a granularity $\gamma^{I,J}$ that is initialized to 0. In a certain sense, $\tau^{I,J}$ is a guess on the number of nonzeroes in $(A'B')_{I,J}$. During the process, we will take care to always maintain the test values of all four child submatrices of $I,J$ at granularity $\tau^{I,J}$, i.e., values $\gamma^{I_a,J_b}_\nu = g^{I_a,J_b}(\omega^\nu)$ for all $0 \le \nu < \tau^{I,J}$, $a,b\in \{1,2\}$, even after updating the matrix $C$ (and thus $A'$). Note that by Proposition~\ref{prop:fastevalg}, for any canonical submatrix $I,J$, we can compute test values for $I, J$ at granularity $\tau$ in time $\tOh( (|I|+\tau) n)$. 

\paragraph*{A second failed approach}
A natural idea is to find \emph{all} nonzeroes in $(A'B')_{I,J}$ once we determine that a canonical submatrix $I,J$ contains at least one nonzero. This can be done by performing an \AllZeroes test on all four child submatrices of $I,J$ in time $\tOh( (|I|+t)n)$ and recursing on all those children containing at least one nonzero. It is straightforward to show that this amounts to an algorithm running in time $\tOh(\sqrt{t} n^2 + t^2 n)$, still slower than our desired running time.

One might try to use the following observation: Let $z$ be the number of nonzeroes of $(A'B')_{I,J}$. Then in fact already the test values at granularity $z$ would successfully detect all those child submatrices containing at least one nonzero. This might seem to yield a faster test  at this level of this recursion with a running time $\tOh( (|I|+z)n)$ instead of $\tOh( (|I|+t)n)$. If this was indeed possible, then this would yield a $\tOh(\sqrt{t}n^2 + tn)$-time algorithm (thus, a faster algorithm than what we provide). However, the exact value of $z$ is unknown -- if some child submatrix has only zeroes as test values at granularity $\tau \ll t$, then it still might have nonzeroes for larger granularities, i.e., we do not know when \emph{not} to recurse on a child submatrix without testing at full granularity $t$.

Surprisingly, we can still remedy this situation by incurring an additional running time cost of $\tOh(t^2)$. The high-level idea is as follows: Once we determine a submatrix to contain a nonzero, we do an exponential search for the lowest granularity at which we can find a nonzero entry. The crucial point is to obtain a stopping criterion: we show how to dynamically update all previously computed test values when we ``remove'' another nonzero from the search in time $\tOh(t)$. Intuitively, this allows us to determine when to stop the recursion on some submatrix. This update heavily depends on the specific structure of the polynomials~$g^{I,J}$. In the remainder, we give the full description and analysis of this approach.

\paragraph*{Submatrix queue}
As an invariant, we maintain a list $L$ of submatrices $I,J$ with the property that \emph{$(A'B')_{I,J}$ contains at least one nonzero entry}.  Until all nonzeroes are found, it will contain $[n],[n]$, and each member of this list is a submatrix of the preceding member of this list. We iteratively take the last (i.e., smallest) submatrix in $L$ and find some nonzero position $i\in I, j\in J$ of $(A'B')_{I,J}$. We update $C_{i,j}$ and update test values such that for each canonical submatrix $(I,J)$, all test values at the corresponding granularity are kept up-to-date. 

Algorithm~\ref{alg:main} gives the formal outline of the algorithm.

\begin{algorithm} 
\begin{algorithmic}[1]
\Function{\osMM}{$A$,$B$,$t$}
\State initialize $C\gets 0$
\State compute test values $\gamma^{[n],[n]}_0, \dots, \gamma^{[n],[n]}_{t-1}$
\If{$\gamma^{[n],[n]}_\nu \ne 0$ for some $0 \le \nu< t$ }
\State add $[n], [n]$ to $L$
\EndIf
\While{ $L$ is not empty } \label{line:while}
\State let $(I,J)$ be the smallest submatrix in $L$
\State $(i,j) \gets \FindNonzero(I,J)$
\State $C_{i,j} \gets \langle a_i, b_j \rangle$
\State $\UpdateVals([n],[n],i,j)$
\EndWhile
\State \Return $C$
\EndFunction
\end{algorithmic}
\caption{Computing the matrix product $AB$, if $AB$ contains at most $t$ nonzeroes.}
\label{alg:main}
\end{algorithm}

\paragraph*{Finding a Nonzero}
By the above outline, we only call $\FindNonzero$ on submatrices for which we know that there is at least one nonzero. We split each matrix into four equi-dimensional submatrices and do an exponential search for the smallest granularity such that the test values of the submatrices allow us to determine a submatrix containing at least one nonzero. Note that here, we only compute test values if they have not previously been computed. Furthermore, when we compute test values for the first time, we compute test values for the granularity $\tau = |I|=|J|$ (since computing test values for submatrix $I,J$ of granularity $\tau$ takes time $\tOh((|I|+\tau)n)$, we obtain the first $|I|$ test values essentially for free).

The exponential search guarantees that we never set $\tau^{I,J}$ to a value higher than $2z$, where $z$ denotes the number of nonzeroes in $(AB)_{I',J'}$ (see Lemma~\ref{lem:timeprops}\ref{enum:granularitybound}).  

The formal outline is given in Algorithm~\ref{alg:FindNonzero}.

\begin{algorithm} 
\begin{algorithmic}[1]
\Function{\FindNonzero}{$I$,$J$}
\If{$I=\{i\}$ and $J=\{j\}$}
\Return $(i,j)$
\EndIf
\State split $I = I_1 \cup I_2$, $J=J_1 \cup J_2$ into equal-sized disjoint intervals
\If{$\tau^{I,J} = 0$} $\tau^{I,J} \gets |I|$
\EndIf
\For{all $a,b \in \{1,2\}$}
\State compute and store test values for $I_a,J_b$ at granularity $\tau^{I,J}$, \emph{if not already stored}
\EndFor
\If{$\gamma^{I_a,J_b}_\nu \ne 0$ for some $a,b\in \{1,2\}, 0 \le \nu < \tau^{I,J}$}
\State add $I_a,J_b$ to $L$
\State \Return $\FindNonzero(I_a,J_b)$
\Else
\State $\tau^{I,J} \gets 2\tau^{I,J}$
\State \Return $\FindNonzero(I,J)$
\EndIf
\EndFunction
\end{algorithmic}
\caption{Subroutine to find a nonzero entry in submatrix $(A'B')_{I,J}$.}
\label{alg:FindNonzero}
\end{algorithm}

\paragraph*{Updating Test Values}
Crucially, we rely on being able to quickly update test values once we have determined some nonzero entry and update our result. Naively recomputing the test values at full granularity~$t$ already for a single submatrix costs at least $\Omega(tn)$ time, which would yield a total update time of $\Omega(nt^2)$. To avoid these costs, we use the observation that after updating a single entry $C_{i,j}$ of $C$ (and thus $A'B'$), the only change of test values affect $\gamma^{I,J}_\nu$ with $i\in I$, $j\in J$, as the only change in the polynomials $q^{I}_k, r^{J}_k$ concerns a single coefficient change of~$q^I_{n+j}$ for the intervals $I$ with $i\in I$ (see Lemma~\ref{lem:changes}). 

To formalize the update rule, let $(i,j)$ be a position of $C$ that we set to a nonzero value. Note that this changes $A'$ by changing the coordinate $A'_{i,n+j}$ from 0 to $C_{i,j}$. Correspondingly, we define for any $I$, $\bar{q}^I_{n+j}(Z)$ as the polynomial for the old values (i.e., where the coefficient corresponding to $A'_{i,n+j}$ is 0, while $q^I_{n+j}$'s corresponding coefficient is $C_{i,j}$). We then update the test values as specified in Algorithm~\ref{alg:update}.

\begin{algorithm} 
\begin{algorithmic}[1]
\Function{\UpdateVals}{$I$,$J$,$i$,$j$}
\State set $\tau = \begin{cases} t & \text{if } I=J=[n],\\ \tau^{I',J'} & \text{o.w., where } I',J' \text{ is the parent submatrix of } I,J\end{cases}$
\State compute $\bar{q}_{n+j}^{I}(\omega^{\nu})$, $q_{n+j}^{I}(\omega^{\nu})$ and $r_{n+j}^{J}(\omega^{\nu})$ for all $0 \le \nu < \tau$, using Proposition~\ref{prop:fastevalg}
\For{$0 \le \nu < \tau$}
\State update $\gamma^{I,J}_\nu \gets \gamma^{I,J}_\nu + (q^I_{n+j}(\omega^{\nu}) - \bar{q}^I_{n+j}(\omega^\nu)) \cdot r_{n+j}^J(\omega^\nu)$ 
\EndFor
\If{$\gamma^{I,J}_\nu = 0$ for all $0 \le \nu < \tau$}
delete $I,J$ from $L$
\EndIf
\If{$|I| > 1$}
\State split $I = I_1 \cup I_2$, $J=J_1 \cup J_2$ into equal-sized disjoint intervals
\State let $a,b\in \{1,2\}$ such that $i \in I_a, j\in J_b$
\State $\UpdateVals(I_a, J_b, i, j)$
\EndIf
\EndFunction
\end{algorithmic}
\caption{Updating test values and $L$-membership of all canonical submatrices of $I,J$ after a value change in $C_{i,j}$}
\label{alg:update}
\end{algorithm}

\subsection{Correctness}

Let us argue that the output of Algorithm~\ref{alg:main} satisfies $C=AB$ whenever $AB$ contains at most $t$ nonzeroes. Note that any $C_{i,j}$ set to a nonzero value during the process is set to its correct value $\langle a_i, b_j \rangle$. It remains to argue that we indeed find all positions $(i,j)$ of nonzeroes.

Let us first consider updates of test values. Crucially, we establish that after every update of some $C_{i,j}$, the call $\UpdateVals([n],[n],i,j)$ correctly updates all previously computed test values $\gamma^{I,J}_\nu$ to maintain $\gamma^{I,J}_\nu = g^{I,J}(\omega^{\nu})$. This follows from the following observation.
\begin{lemma}\label{lem:changes}
Consider a change of $C_{i,j}$ from 0 to $\langle a_i, b_j \rangle \ne 0$. Then the only test values~$\gamma^{I,J}_\nu$ that change satisfy $i\in I, j\in J$. In particular, for the resulting matrices $A',B'$ we have $g^{I,J}(\omega^\nu) =  \bar{g}^{I,J}(\omega^\nu) + (q^I_{n+j}(\omega^{\nu}) - \bar{q}^I_{n+j}(\omega^\nu)) \cdot r_{n+j}^J(\omega^\nu)$, where $\bar{g}^{I,J}, \bar{q}^I_{n+j}$ are the polynomials $g^{I,J}, q^{I}_{n+j}$ before the change.
\end{lemma}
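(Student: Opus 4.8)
The plan is to unwind the definitions of the test polynomials and track exactly which one changes when a single entry of $C$ is modified. Recall that setting $C_{i,j}$ from $0$ to $\langle a_i, b_j\rangle$ changes the matrix $A' = (A \mid C)$ in exactly one coordinate, namely $A'_{i,n+j}$. Fix a canonical submatrix $I = \{i_0+1,\dots,i_0+\ell\}$, $J = \{j_0+1,\dots,j_0+\ell\}$. First I would observe that the polynomials $r^J_k$ depend only on $B' = \binom{B}{-I}$, which is untouched, so no $r^J_k$ changes; and $q^I_k(Z) = \sum_{s=1}^\ell a'_{i_0+s}[k]\,Z^{s-1}$ depends on $A'$ only through the rows indexed by $I$ and the column indexed by $k$. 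Hence $q^I_k$ can change only if $i \in I$ and $k = n+j$; in that case only the single coefficient of $Z^{\,i - i_0 - 1}$ changes (from $0$ to $C_{i,j}$), which is precisely the relationship between $\bar q^I_{n+j}$ and $q^I_{n+j}$.

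Next I would plug this into the defining formula $g^{I,J}(X) = \sum_{k=1}^{2n} q^I_k(X)\, r^J_k(X^\ell)$. If $i \notin I$ then no summand changes, so $g^{I,J}$ is unchanged; likewise, since $j \in J$ is forced for the coefficient of $A'_{i,n+j}$ to appear in $r^J_{n+j}$ with a nonzero contribution (more precisely, $r^J_{n+j}(Z) = \sum_{s=1}^\ell b'_{j_0+s}[n+j]Z^{s-1}$ and $b'_{j_0+s}[n+j] = -[j_0+s = j]$, so $r^J_{n+j}\equiv 0$ unless $j\in J$), the term $q^I_{n+j}(X)\,r^J_{n+j}(X^\ell)$ contributes nothing to $g^{I,J}$ unless $j\in J$. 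Therefore $g^{I,J}$, and hence each test value $\gamma^{I,J}_\nu = g^{I,J}(\omega^\nu)$, can change only when $i\in I$ and $j\in J$. When both hold, the only changed summand is the $k = n+j$ term, and subtracting old from new gives
\[
g^{I,J}(X) - \bar g^{I,J}(X) = \bigl(q^I_{n+j}(X) - \bar q^I_{n+j}(X)\bigr)\cdot r^J_{n+j}(X^\ell),
\]
and evaluating at $X = \omega^\nu$ yields the stated update identity $g^{I,J}(\omega^\nu) = \bar g^{I,J}(\omega^\nu) + (q^I_{n+j}(\omega^\nu) - \bar q^I_{n+j}(\omega^\nu))\cdot r^J_{n+j}(\omega^\nu)$.

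I do not anticipate a serious obstacle here; the statement is essentially a bookkeeping consequence of the bilinear, coordinatewise structure of $g^{I,J}$. The one point that needs slight care is the claim about \emph{which} coefficient of $q^I_{n+j}$ changes: one must confirm that the coordinate $A'_{i,n+j}$ sits at position $s = i - i_0$ within the row-window $I$, so that exactly the degree-$(i-i_0-1)$ coefficient is altered and all others stay fixed — this is what makes the single-term difference $q^I_{n+j} - \bar q^I_{n+j}$ a monomial of known degree, which will matter for the efficiency analysis of \UpdateVals later but is not needed for correctness of the identity itself. A second minor subtlety is making sure the argument is applied to \emph{canonical} submatrices only, since those are the ones for which test values are maintained; but nothing in the algebra uses canonicity, so the restriction is harmless.
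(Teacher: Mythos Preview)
Your proposal is correct and follows essentially the same argument as the paper: identify that changing $C_{i,j}$ alters only the single entry $A'_{i,n+j}$, observe this affects only $q^I_{n+j}$ when $i\in I$, note that $r^J_{n+j}\equiv 0$ unless $j\in J$, and read off the single-term difference in $g^{I,J}$. Your write-up is somewhat more detailed than the paper's (in particular the explicit identification of which coefficient of $q^I_{n+j}$ changes), but the logic is identical.
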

\begin{proof}
Note that a change of $C_{i,j}$ is a change of $A'_{i, n+j}$, whose values is only used as a coefficient for the monomial representing $i$ in the polynomials $q^I_{n+j}$ with $i\in I$. Furthermore, by definition of $B'$, we have $r^J_{n+j} \not \equiv 0$ if and only if $j \in J$ (if $j \in J$, $r^J_{n+j}$ consists of a single monomial with coefficient -1, representing $j$; otherwise, $r^J_{n+j} \equiv 0$). By the two facts above, we have $g^{I,J}(\omega^\nu) - \bar{g}^{I,J}(\omega^\nu) = (q^I_{n+j}(\omega^{\nu}) - \bar{q}^I_{n+j}(\omega^\nu)) \cdot r_{n+j}^J(\omega^\nu)$, which can be nonzero only if $i\in I$ and $j\in J$.
\end{proof}

Thus, whenever $(A'B')$ contains $0 < z \le t$ nonzeroes, the list $L$ cannot be empty: By completeness~\ref{enum:completeness}, some $\gamma^{[n],[n]}_\nu$ with $0 \le \nu < t$ must be nonzero. Since we keep all test values correctly updated, $[n],[n]$ will be removed from this list at the time \emph{all} nonzeroes have been found and removed. 

Furthermore, we maintain the invariant that all submatrices in $L$ and all submatrices for which we call $\FindNonzero$ indeed contain at least one nonzero: By soundness~\ref{enum:soundness}, we add $I,J$ to $L$ and call $\FindNonzero(I,J)$ only if $(A'B')_{I,J}$ indeed contains at least one nonzero. We remove $I,J$ from $L$ once the test values no longer guarantee $(A'B')_{I,J}$ to contain a nonzero.  

Finally, $\FindNonzero(I,J)$ terminates, yielding a nonzero entry: Since $(A'B')_{I,J}$ contains $0 < z \le t$ nonzeroes, it either consists of a single nonzero entry or has a submatrix $I_a, J_b$, $a,b\in \{1,2\}$ with at most $z$ nonzeroes. At the latest when we make a call $\FindNonzero(I,J)$ with granularity $\tau^{I,J} \ge z$, some test value $\gamma^{I_a,J_b}_\nu, 0 \le \nu < \tau^{I,J}$ must be nonzero by completeness \ref{enum:completeness}, and we recurse on a smaller subproblem.

Combining the observations above, we obtain that as long as not all nonzeroes have been found and removed, the while loop of Line~\ref{line:while} in Algorithm~\ref{alg:main} will make another iteration that finds, correctly determines  and removes a nonzero entry, yielding correctness of the algorithm.

\subsection{Running Time}

We bound the running time of Algorithm~\ref{alg:main} by $\tOh(\sqrt{t}n^2 + t^2)$. To this end, we start with a few central observations.

\begin{lemma}\label{lem:timeprops}
Algorithm~\ref{alg:main} has the following properties.
\begin{enumerate}[label=\roman{*})]
\item\label{enum:granularitybound} Let $(I,J)$ be a canonical submatrix and $z$ be the number of nonzeroes of $(AB)_{I,J}$. At the end of the process, we have $\tau^{I,J} \le \max\{|I|, 2z\}$.
\item\label{enum:totalfindcalls} The total running time of calls to $\FindNonzero$ is bounded by $\Oh( \sqrt{t}n^2\log^{2+o(1)} n + tn\log^{3+o(1)} n)$.
\end{enumerate}
\end{lemma}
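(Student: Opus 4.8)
\textbf{Proof plan for Lemma~\ref{lem:timeprops}.}

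For part~\ref{enum:granularitybound}, the plan is to track how $\tau^{I,J}$ can grow. The granularity $\tau^{I,J}$ is only ever modified inside \FindNonzero$(I,J)$: it is initialized to $|I|$ on the first call, and thereafter doubled each time the current test values at granularity $\tau^{I,J}$ fail to reveal a nonzero child submatrix. So it suffices to argue that \FindNonzero$(I,J)$ never doubles $\tau^{I,J}$ once $\tau^{I,J} \ge z$. Since \FindNonzero$(I,J)$ is only invoked when $(A'B')_{I,J}$ contains at least one nonzero, and $(A'B')_{I,J}$ always has at most $z$ nonzeroes (removing found nonzeroes only decreases the count), some child submatrix $I_a,J_b$ contains between $1$ and $z$ nonzeroes. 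By completeness~\ref{enum:completeness}, that child has a nonzero test value $\gamma^{I_a,J_b}_\nu$ for some $0 \le \nu < z \le \tau^{I,J}$, so the \textbf{if}-branch of Algorithm~\ref{alg:FindNonzero} triggers and we recurse on a child rather than doubling. Hence the last doubling happened when $\tau^{I,J}$ was at most $\max\{|I|,z\}$ (the value just before the doubling was at most $\max\{|I|/2, z\}$, being either the initial $|I|$ or a smaller power of two below $z$), so after that doubling $\tau^{I,J} \le \max\{|I|, 2z\}$. One should double-check the edge case where $z$ is not a power of two and where the initialization to $|I|$ interacts with doubling; I expect the clean statement is simply: $\tau^{I,J}$ is a power of two (or $|I|$), it never exceeds the smallest power of two that is $\ge z$ unless $|I|$ already does, hence $\tau^{I,J}\le\max\{|I|,2z\}$.

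For part~\ref{enum:totalfindcalls}, the plan is to charge the cost of all \FindNonzero work. Each invocation of \FindNonzero$(I,J)$ (counting only the work at that level, not the recursive tail call) does one of two things: either it computes test values for the four children $I_a,J_b$ at granularity $\tau^{I,J}$ that were not previously stored, costing $\tOh((|I|+\tau^{I,J})n)$ by Proposition~\ref{prop:fastevalg}, or it finds all stored test values zero and doubles $\tau^{I,J}$, whose non-test-value work is $\Oh(\tau^{I,J})$ to scan the stored values plus the recursive call. The key accounting points are: (i) test values for a given submatrix $I_a,J_b$ at a given granularity $\tau$ are computed \emph{at most once} (the ``if not already stored'' guard), and by part~\ref{enum:granularitybound} the granularities at which we ever compute them form a geometric progression up to $\max\{|I_a|, 2z\}$ where $z\le t$ is the nonzero count of that submatrix, so the total cost of all test-value computations for one fixed canonical submatrix of side $\ell$ telescopes to $\tOh((\ell + t)n)$; (ii) \FindNonzero is only ever called on submatrices on the current ``active path'' from $[n],[n]$ down, so at each of the $\log n$ levels there are a bounded number of canonical submatrices that ever become active over the whole execution — here I would bound the number of canonical submatrices that ever enter $L$ (equivalently, ever get \FindNonzero called on them) by $\Oh(t\log n)$, since each nonzero of $AB$ lies in at most $\log n + 1$ canonical submatrices. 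Summing the per-submatrix cost $\tOh((\ell+t)n)$ over all $\Oh(t\log n)$ active submatrices: the $\tOh(\ell n)$ part, summed over a root-to-leaf path, gives $\tOh(n\cdot n) = \tOh(n^2)$ per path; but we must be careful, because there are potentially $t$ distinct paths explored over the run, giving $\tOh(tn^2)$, which is too slow — so the $\ell$-dependent term needs a sharper argument.

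The sharper argument, and the step I expect to be the main obstacle, is the $\sqrt t$ savings: we need the total of the $\tOh(\ell n)$ ``initialization'' costs over all $\Oh(t\log n)$ active submatrices to be $\tOh(\sqrt t\, n^2)$ rather than $\tOh(t n^2)$. The intuition, as the paper's overview hints, is that submatrices of large side $\ell$ are few: a canonical submatrix of side $\ell$ that becomes active must contain a nonzero, and such submatrices at a fixed level are disjoint, so there are at most $\min\{(n/\ell)^2, t\}$ of them over the whole run — wait, they are not all simultaneously active, but each is entered at most once into $L$, and the number that ever contain a nonzero at level with side $\ell$ is at most $\min\{(n/\ell)^2, t\}$ only if nonzeroes are not removed and re-added; since removed nonzeroes stay removed, a canonical submatrix of side $\ell$ is added to $L$ at most once, and there are at most $(n/\ell)^2$ canonical submatrices of side $\ell$ in total, but also at most $t$ of them ever contain a nonzero. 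So the number of active side-$\ell$ submatrices is at most $\min\{(n/\ell)^2, t\}$, and the initialization cost summed over them is $\tOh(\min\{(n/\ell)^2,t\}\cdot \ell n)$. Summing over the $\log n$ levels $\ell = n, n/2, \dots$: for $\ell \ge n/\sqrt t$ we use the bound $(n/\ell)^2\cdot \ell n = n^3/\ell \le \sqrt t\, n^2$; for $\ell < n/\sqrt t$ we use $t\cdot \ell n < t \cdot (n/\sqrt t)\cdot n = \sqrt t\, n^2$. In both regimes each level contributes $\tOh(\sqrt t\, n^2)$, for a total $\tOh(\sqrt t\, n^2 \log n)$ absorbed into the $\log^{2+o(1)}$ factors; and the $\tOh(tn)$ part summed over $\Oh(t\log n)$ submatrices gives $\tOh(t^2 n)$ — hmm, the claimed bound has $tn\log^{3+o(1)}n$, not $t^2 n$, so in fact the $\tau$-dependent (as opposed to $t$-dependent) cost must be used: by part~\ref{enum:granularitybound} the granularity for a side-$\ell$ submatrix with $z$ local nonzeroes is $\tOh(\max\{\ell, z\})$, so its test-value cost is $\tOh((\ell + \max\{\ell,z\})n) = \tOh((\ell+z)n)$, and $\sum z$ over disjoint active submatrices at one level is at most $t$, giving $\tOh(tn)$ per level and $\tOh(tn\log n)$ total — this is where the extra $\log$ factor in the statement comes from. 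So the obstacle is really bookkeeping: one must carefully separate the $\ell$-dependent and $z$-dependent contributions, observe disjointness of active submatrices at each level, and feed in part~\ref{enum:granularitybound} to replace the worst-case $t$ by the local count $z$ wherever possible.
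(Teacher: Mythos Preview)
Your plan matches the paper's proof closely: part~\ref{enum:granularitybound} is argued exactly as you describe, and for part~\ref{enum:totalfindcalls} the paper likewise charges each canonical submatrix $I,J$ a total cost of $\Oh((|I|+z(I,J))n\log^{2+o(1)}n)$ using~\ref{enum:granularitybound}, bounds the number of active level-$i$ submatrices by $\min\{2^{2i},t\}$, and splits the sum at the threshold level $\bar i=\tfrac12\log_2 t$ (your $\ell=n/\sqrt t$). Two points need tightening. First, your per-level bound of $\sqrt t\,n^2$ summed over $\log n$ levels loses a $\log n$ factor; the paper instead observes that the sums $\sum_{i\le \bar i} 2^i n$ and $\sum_{i>\bar i} tn/2^i$ are \emph{geometric}, giving $\Oh(\sqrt t\,n)$ with no extra logarithm, which is needed to land on $\log^{2+o(1)}n$ rather than $\log^{3+o(1)}n$ in the first term.

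Second, and more substantively, you budget $\Oh(\tau^{I,J})$ per call just to scan the stored test values for a nonzero. Since a fixed $(I,J)$ can receive up to $z(I,J)$ calls that successfully descend (one per nonzero of $(AB)_{I,J}$), this scanning alone contributes $\Oh(\tau^{I,J}\cdot z(I,J))$; at the root this is $\Theta(t^2)$, which is not in the lemma's claimed bound for \FindNonzero. The paper avoids this by an implementation detail you omit: it stores, for each child, only the \emph{nonzero} test values as a list (together with a counter $\alpha^{I,J}$ recording how many test values have been computed), so that ``is there any nonzero test value?'' is an $\Oh(1)$ check. With that in place, the non-test-value work per call is $\Oh(1)$, and the total number of calls on $(I,J)$ is at most $z(I,J)+\Oh(\log \tau^{I,J})$, which is absorbed. (Also, your side remark that a canonical submatrix is ``added to $L$ at most once'' is not obviously true and is not needed; the count of distinct active submatrices at a level is at most $t$ simply because each one must contain some nonzero of $AB$.)
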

\begin{proof}

For~\ref{enum:granularitybound}, assume that $\tau^{I,J} = \tau > |I|$. In this case, there must have been some call $\FindNonzero(I,J)$ which increased $\tau^{I,J}$ from $\tau/2$ to $\tau$. Consider the last such call. Let $z$ be the number of nonzeroes of $(A'B')_{I,J}$ and recall that we call $\FindNonzero(I,J)$ only if $z>0$. Thus, there must be some child submatrix $I_{a},J_{b}$ of $I,J$ with $0 < z' \le z$ nonzeroes in $(A'B')_{I_{a},J_{b}}$. Since the current call increases $\tau^{I,J}$ to $\tau$, all test values at granularity $\tau/2$ must be zero, in particular also the test values for $I_{a},J_{b}$.  At this point, $(A'B')_{I_{a},J_{b}}$ must have at least $\tau/2+1$ nonzeroes, since if $z' \le \tau/2$, the test values at granularity $\tau/2$ would have detected at least one nonzero for $I_{a},J_{b}$ by completeness~\ref{enum:completeness}. Thus $z \ge z' \ge \tau/2 + 1$, yielding the claim since the number of nonzeroes of $(AB)_{I,J}$ is never less than the number $z$ of nonzeroes in $(A'B')_{I,J}$. 

For~\ref{enum:totalfindcalls}, we first specify more precisely how we implement Algorithm~\ref{alg:FindNonzero}: Consider any child submatrix $I_a,J_b$ for which we compute test values by a call $\FindNonzero(I,J)$ of the parent submatrix. We keep a counter $\alpha^{I,J} \le \tau^{I,J}$ that stores the highest granularity $\tau$ for which we have computed test values $\gamma^{I_a,J_b}_0, \dots, \gamma^{I_a,J_b}_{\tau-1}$, where $a,b\in \{1,2\}$ (initially, $\alpha^{I,J} = 0$). In this way, we can quickly determine whether we have already all desired test values in store or need to compute additional test values. Specifically, whenever we need to compute new test values, i.e. $\alpha^{I,J} < \tau^{I,J}$, we use Proposition~\ref{prop:fastevalg} to compute the $\tau^{I,J}- \alpha^{I,J}$ missing test values $\gamma^{I_a,J_b}_{\alpha^{I,J}}, \dots, \gamma^{I_a,J_b}_{\tau^{I,J}-1}$ in time $\Oh((|I|+\tau^{I,J}-\alpha^{I,J}) n\log^{2+o(1)} n) = \Oh((\tau^{I,J}-\alpha^{I,J})n\log^{2+o(1)} n)$ (note that $\tau^{I,J} - \alpha^{I,J} \ge |I|$ holds since $\tau^{I,J}$ is initially set to $|I|$, and always doubled afterwards). In total, the total time spent for computing test values $\gamma^{I_a,J_b}_0, \dots, \gamma^{I_a,J_b}_{\tau^{I,J}-1}$, \emph{disregarding the time spent in later updates}, is bounded by $\Oh(\tau^{I,J} n \log^{2+o(1)} n)$. 
To store the test values, we maintain a list that stores $(\nu,\gamma^{I_a,J_b}_{\nu})$ (sorted by $\nu$) for all $0 \le \nu < \alpha^{I,J}$ with $\gamma^{I_a,J_b}_\nu \ne 0$. In this way, we can determine in time $\Oh(1)$ whether some nonzero test value exists, and still recover all test values.

For the analysis, we build a tree over submatrices $I,J$ for which $\FindNonzero(I,J)$ was called at least once. We assign to each node $I,J$ the total time spent in calls $\FindNonzero(I,J)$, without counting the time spent in recursive calls $\FindNonzero(I_a,J_b)$ to smaller submatrices $I_a,J_b$. In this tree,  $I,J$ is a parent of $I_a,J_b$ if any $\FindNonzero(I,J)$ call resulted in a call $\FindNonzero(I_a,J_b)$. Observe that $[n],[n]$ is the root node; we call $I,J$ a level-$i$ node, if its distance to $[n],[n]$ is $i$. Note that a level-$i$ node has $|I|=|J|=n/2^i$. We first argue that the total time spent for a level-$i$ node $I,J$ is bounded by $\Oh( (n/2^i+z(I,J))n\log^{2+o(1)} n)$, where $z(I,J)$ is the number of nonzeroes in $(AB)_{I,J}$: We account for the computation of the test values (again, disregarding updates) by $\Oh(\tau^{I,J} n \log^{2+o(1)} n)$ (as argued above). Checking for nonzero test values takes constant time per call (as argued above), and there are at most $z(I,J)$ many calls that result in determining a submatrix containing a nonzero and at most $\log \tau^{I,J}$ calls that result in doubling the granularity. Thus, the running time assigned to $I,J$ is bounded by $\Oh( \tau^{I,J} n \log^{2+o(1)} n + z(I,J) + \log \tau^{I,J}) = \Oh( (n/2^i + z(I,J)) n \log^{2+o(1)} n)$, using that $\tau^{I,J} = \Oh(|I|+z(I,J))$ by \ref{enum:granularitybound}.

To bound the total running time of all calls to \FindNonzero, note that we have at most $\min\{2^{2i}, t\}$ nodes at level $i$ (since there are only $2^{2i}$ such submatrices, and the path of each of the at most $t$ nonzeroes can contribute at most one node in each level). Define $\bar{i} = (1/2) \cdot \log_2 t$ and compute,
\begin{align*} 
\sum_{i=0}^{\log_2 n} \sum_{\text{level-}i \; (I,J)} (n/2^i + z(I,J)) & \le  \sum_{i=0}^{\log_2 n} \min\{2^{2i}, t\} n/2^i + \sum_{i=0}^{\log_2 n} \sum_{\text{level-}i \;(I,J)} z(I,J) \\
& \le  \sum_{i=0}^{\bar{i}}  2^i n + \sum_{i=\bar{i}+1}^{\log_2 n} tn/2^i + \sum_{i=0}^{\log_2 n} \sum_{\text{level-}i \; (I,J)} z(I,J) \\
& \le  n 2^{\bar{i}+1} + tn/2^{\bar{i}} + \sum_{i=0}^{\log_2 n} t = \Oh(n\sqrt{t} + t\log n). 
\end{align*}
Thus, we obtain that the total running time spent in calls to $\FindNonzero$ is bounded by
\[ \Oh\left(\sum_{i=0}^{\log_2 n} \sum_{\text{level-}i \; (I,J)} (n/2^i + z(I,J)) n \log^{2+o(1)} n\right) = \Oh(\sqrt{t} n^2  \log^{2+o(1)} n + tn\log^{3+o(1)} n). \]
\end{proof}

Updating test values can be done in time $\tOh(t+n)$ per update.
\begin{lemma}\label{lem:updatetime}
A call $\UpdateVals(I,J, i, j)$ with $|I|=|J|=\ell$ runs in time $\Oh( (\ell+t) \log^{3+o(1)} \ell)$.
\end{lemma}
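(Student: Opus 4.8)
The plan is to bound the cost of a single call $\UpdateVals(I,J,i,j)$ and then observe that the recursion descends exactly along the canonical submatrices containing $(i,j)$, of which there are $\Oh(\log \ell)$. So I would first analyze the work done at one recursion level, i.e.\ for a fixed canonical submatrix $I,J$ with $i\in I$, $j\in J$, ignoring the recursive call to the child.

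First I would pin down the granularity $\tau$ used in that level: by the case distinction in Algorithm~\ref{alg:update}, $\tau = t$ if $I=J=[n]$, and otherwise $\tau = \tau^{I',J'}$ where $I',J'$ is the parent. In either case $\tau \le t$ by construction (the root is tested at granularity $t$, and Lemma~\ref{lem:timeprops}\ref{enum:granularitybound} gives $\tau^{I',J'} \le \max\{|I'|, 2z\} \le \max\{n, 2t\} = \Oh(t)$ since $z \le t$ and $|I'| \le n \le t$ may fail for small $t$; more simply, $\tau^{I',J'}$ is only ever set to $|I'|$ initially and doubled, and every doubling in \FindNonzero\ is justified by $z$ nonzeroes, so $\tau^{I',J'} = \Oh(|I'| + t) = \Oh(n+t)$ — but we also know the parent's test values were actually computed, and those are only ever computed up to the parent's own granularity, which by the same chain of reasoning is $\Oh(n+t)$; in fact it suffices that $\tau \le t + n$). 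The key point: the per-level work is dominated by one application of Proposition~\ref{prop:fastevalg} to evaluate the three polynomials $\bar q^I_{n+j}$, $q^I_{n+j}$, $r^J_{n+j}$ at the $\tau$ points $\omega^0,\dots,\omega^{\tau-1}$. Each of these polynomials has degree $< |I| = \ell$ (for $q,\bar q$) or $<\ell$ (for $r$), so Proposition~\ref{prop:fastevalg}, or rather plain multipoint evaluation of Lemma~\ref{lem:polymultipoint}, evaluates a single degree-$\Oh(\ell)$ polynomial at $\tau$ points in time $\Oh((\ell+\tau)\log^{2+o(1)}\ell)$. Then the loop updating $\gamma^{I,J}_\nu$ for $0\le\nu<\tau$ costs $\Oh(\tau)$ field operations, and the check ``$\gamma^{I,J}_\nu = 0$ for all $\nu$'' together with the possible deletion from $L$ costs $\Oh(\tau)$ (or $\Oh(1)$ with the sparse-list representation from the proof of Lemma~\ref{lem:timeprops}). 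So one level costs $\Oh((\ell+\tau)\log^{2+o(1)}\ell) = \Oh((\ell+t)\log^{2+o(1)}\ell)$, absorbing the additive $n$-type terms since $\tau \le t$ suffices at the relevant granularities — I need to be careful here that $\tau$ really is $\le t$ and not $\le t+n$; re-examining, at the root $\tau = t$ exactly, and at deeper levels $\tau = \tau^{I',J'}$ which is at most the value to which the parent's granularity was ever raised, and that is capped at $\Oh(|I'|+z(I',J')) = \Oh(n+t)$, so in the worst case one level costs $\Oh((\ell+t+n)\log^{2+o(1)}\ell)$; since the lemma statement has a ``$+n$'' implicit — actually it is stated as $\Oh((\ell+t)\log^{3+o(1)}\ell)$, so I should double check whether $n$ can exceed $t$: yes it can when $t < n$. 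I will handle this by noting that \UpdateVals\ is only invoked on submatrices inside a $\FindNonzero$ call whose granularity has been shown to be $\Oh(|I|+t) = \Oh(n+t)$, and fold an $\Oh(n)$ term into the statement — but the lemma as written omits it, so the intended reading must be that at the point of call the relevant $\tau$ is $\Oh(t)$, e.g.\ because $\tau^{[n],[n]}$ as a \emph{parent granularity} for \UpdateVals\ recursion is exactly $t$. I will just carry $\tau \le t$ from the algorithm's definition, where for $I=J=[n]$ we literally set $\tau = t$, and for children the parent granularity is at most the granularity ever assigned there which is $\Oh(t)$ once we also note $|I'| \le n$ and $n \le $ ... — cleanest is to assume $t \ge n$ WLOG for this bound or simply absorb, as the main theorem's running time has the $\sqrt t n^2$ term dominating anyway.)

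Second, I would handle the recursion. \UpdateVals\ calls itself once, on the unique child $I_a,J_b$ with $i\in I_a$, $j\in J_b$, and $|I_a| = \ell/2$. Hence the recursion depth starting from $|I|=\ell$ is $\log_2\ell$, and the total cost is $\sum_{\kappa=0}^{\log_2\ell}\Oh((2^\kappa + t)\log^{2+o(1)}\ell)$ where $2^\kappa$ ranges over the sizes $\ell, \ell/2, \dots, 1$. The geometric sum $\sum 2^\kappa = \Oh(\ell)$ and the $t$-term contributes $\Oh(t\log\ell)$, giving total $\Oh((\ell+t)\log^{3+o(1)}\ell)$, which is exactly the claimed bound.

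The main obstacle — or rather the point requiring the most care — is the bookkeeping of which test values actually need recomputation versus which are already stored, i.e.\ making sure the cost accounting in Lemma~\ref{lem:timeprops}\ref{enum:totalfindcalls} (where computing test values is charged against $\FindNonzero$) is consistent with the cost accounting here (where \UpdateVals\ recomputes $q^I_{n+j}, \bar q^I_{n+j}, r^J_{n+j}$ at granularity $\tau$ afresh each time). The resolution is that \UpdateVals\ does \emph{not} recompute the full test-value arrays $\gamma^{I_a,J_b}_\nu$ from scratch; it only evaluates the three single-coefficient-affected polynomials and applies the additive correction from Lemma~\ref{lem:changes}, so the degree-$\ell$ multipoint evaluation is over polynomials of degree $<\ell$ (not over anything of size $n$), and the number of corrected entries is exactly $\tau$. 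A secondary subtlety is confirming $\tau\le t$ (or at worst $\Oh(t+n)$) at every level of the \UpdateVals\ recursion, which follows from the definition of $\tau$ in Algorithm~\ref{alg:update} together with Lemma~\ref{lem:timeprops}\ref{enum:granularitybound}; I would state this explicitly before summing.
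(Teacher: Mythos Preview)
Your approach is correct and essentially identical to the paper's: bound the work at one recursion level by a single multipoint evaluation of the three degree-$<|I|$ polynomials $\bar q^I_{n+j}$, $q^I_{n+j}$, $r^J_{n+j}$ at $\tau$ points, then sum over the $\log_2\ell$ levels of the descent along the canonical submatrices containing $(i,j)$.

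The one place you get tangled is the bound on $\tau$, and the fix is simpler than your discussion suggests. At a recursion level where the current submatrix has size $\ell'$, its parent has size exactly $2\ell'$ (not $n$), so Lemma~\ref{lem:timeprops}\ref{enum:granularitybound} gives $\tau = \tau^{I',J'} \le \max\{|I'|,2z(I',J')\} \le \max\{2\ell',2t\} = \Oh(\ell'+t)$; at the root the algorithm sets $\tau=t$ literally. Hence the per-level cost is $\Oh((\ell'+\tau)\log^{2+o(1)}\ell') = \Oh((\ell'+t)\log^{2+o(1)}\ell')$, and your final summation $\sum_{\ell'}(\ell'+t) = \Oh(\ell + t\log\ell)$ goes through without any ``assume $t\ge n$ WLOG'' or absorption into the main theorem. (The paper's own proof actually writes ``$\tau\le 2t$'', which is slightly loose for the same reason you noticed; the $\max\{2\ell',2t\}$ bound is what makes it rigorous.)
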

\begin{proof}
Note that in the execution of $\UpdateVals(I,J,i,j)$, we have $\tau \le 2t$ by Lemma~\ref{lem:timeprops}~\ref{enum:granularitybound}. Thus, computing all values $\bar{q}^I(\omega^\nu), q^I(\omega^\nu), r^J(\omega^\nu), 0 \le \nu < \tau$ takes time $\Oh( (\ell+t) \log^{2+o(1)} \ell)$ by Proposition~\ref{prop:fastevalg}. Recall that all test values are stored as a list of nonzero values (as detailed in the proof of Lemma~\ref{lem:timeprops}). For each $\nu = 0, \dots, \tau -1$, we see whether a nonzero value $\gamma^{I,J}_\nu$ is stored in the list, otherwise $\gamma^{I,J}_\nu = 0$. We can thus compute the new value of $\gamma^{I,J}_\nu$ and, update the list with the new value (i.e., include it if it is nonzero, and leave it out if zero). This takes time $\Oh(t)$. Since we do these $\Oh((\ell+t) \log^{2+o(1)} \ell)$-time computations in each of the $\log_2 \ell$ levels of the recursion, the total running time is bounded by $\Oh( (\ell+t) \log^{3+o(1)} \ell)$.
\end{proof}

Note that in Algorithm~\ref{alg:main}, we spent a time of $\Oh( (n+t) \log^{2+o(1)} n)$ to compute the test values for $[n],[n]$ using Proposition~\ref{prop:fastevalg}. Afterwards, we have at most $t$ calls of the form $\UpdateVals([n],[n], i, j)$ with a cost of $\Oh( (n+t) \log^{3+o(1)} n )$ each (Lemma~\ref{lem:updatetime}), plus at most $t$ computations of nonzero entries in $C$ with a cost of $\Oh(n)$ each, plus the cost of all calls to \FindNonzero which amounts to $\Oh(\sqrt{t}n^2 \log^{2+o(1)} n + tn \log^{3+o(1)} n)$ by Lemma~\ref{lem:timeprops}~\ref{enum:totalfindcalls}. Thus, in total we obtain $\Oh( t (n+t) \log^{3+o(1)} n + tn + \sqrt{t}n^2 \log^{2+o(1)} n) =  \Oh( t^2 \log^{3+o(1)} n + \sqrt{t}n^2 \log^{2+o(1)} n)$, where we used that $tn = \Oh(t^2 + n^2)$.

This completes the analysis of the algorithm, and thus the proof of Theorem~\ref{thm:errorcorrection}.

\section{Open Questions}
\label{sec:open}

It remains to answer our main question. To this end, can we exploit any of the avenues presented in this work? In particular: Can we (1) find a faster \THREESUM verifier, (2) find a faster \UPIT algorithm for the circuits given in Theorem~\ref{thm:MPVtoUPIT}, or (3) instead of derandomizing Freivalds' algorithm, nondeterministically derandomize the sampling-based algorithm following from our main algorithmic result (which detects up to $\Oh(n)$ errors using Theorem~\ref{thm:main-detect}, and then samples and checks $\Theta(n)$ random entries)?

A further natural question is whether we can use the sparse polynomial interpolation technique by Ben-Or and Tiwari~\cite{Ben-OrT88} (see also \cite{Zippel90, KaltofenY88} for alternative descriptions of their approach) to give a more efficient deterministic algorithm for output-sensitive matrix multiplication. Indeed, they show how to use $\Oh(t)$ evaluations of a $t$-sparse polynomial $p$ to efficiently interpolate $p$ (for $p=g^{A,B}$, this corresponds to determining $AB$). Specifically, the $\Oh(t)$ evaluations define a certain Toeplitz system whose solution yields the coefficients of a polynomial $\zeta(Z) = \prod_{i=1}^z (Z-r_i)$ where $r_i$ is the value of the $i$-th monomial of $p$ evaluated at a certain known value. By factoring $\zeta$ into its linear factors, we can determine the monomials of $p$ (i.e., for $p=g^{A,B}$, the nonzero entries of $AB$). In our case, we can then obtain $AB$ by naive computations of the inner products at the nonzero positions in time $\Oh(nt)$. The bottleneck in this approach appears to be deterministic polynomial factorization into linear factors: In our setting, we would need to factor a degree-$(\le t)$ polynomial over a prime field~$\field_p$  of size $p = \Theta(n^2)$. We are not aware of deterministic algorithms faster than Shoup's $\Oh( t^{2+\varepsilon} \cdot \sqrt{p}\log^2 p)$-time algorithm~\cite{Shoup90}, which would yield an $\Oh(n^2 + nt^{2+\varepsilon})$-time algorithm at best. However, such an algorithm would be dominated by Kutzkov's algorithm~\cite{Kutzkov13}. Can we sidestep this bottleneck? Note that some works improve on Shoup's running time for suitable primes (assuming the Extended Riemann Hypothesis; see \cite[Chapter 14]{vzGathenG13} for references).

\section*{Acknowledgements}
The author wishes to thank Markus Bl\"aser, Russell Impagliazzo, Kurt Mehlhorn, Ramamohan Paturi, and Michael Sagraloff for early discussions on this work and Karl Bringmann for comments on a draft of this paper.

\bibliography{matmult}

\newpage

\appendix 

\section{A Note on Earlier Work}
\label{sec:errordescription}

We found work~\cite{Wiedermann14} that claims to have solved our main question in the affirmative. Unfortunately, the approach is flawed -- we detail the issue here for completeness.
The approach pursued in~\cite{Wiedermann14} is as follows: Defining $D := AB - C$, the aim is to check whether $D = 0$. Define 
\begin{align*}
x(r) & := \left(\begin{array}{c} 1 \\ r \\ r^2 \\ \vdots \\ r^{n-1}\end{array} \right), & \text{and} & & p(r):= x(r)^T \cdot D \cdot x(r).
\end{align*}
The author claims that $p \equiv 0$ if and only if $D=0$. If this would hold, one could evaluate~$p$ on a small number of points to determine whether $p\equiv 0$ and thus $D = 0$. However, the claim does not hold: note that $p(r) = \sum_{i=2}^{2n} \left(\sum_{j=1}^{i-1} D_{j,i-j}\right) r^{i-2}$. Thus already the matrix     
\[ D = \left(\begin{array}{cc} 0 & 1\\ -1 & 0,\end{array}\right)\]
satisfies $D\ne 0$, but $p(r) = 0$.

\end{document}